\DeclareMathAlphabet{\mathpzc}{OT1}{pzc}{m}{it}
\tikzstyle{new style 0}=[fill={rgb,255: red,191; green,63; blue,66}, draw=black, shape=circle]
\tikzstyle{new style 1}=[fill=white, draw=black, shape=rectangle]
\tikzstyle{freccia_tratteggiata}=[->, dashed]
\tikzstyle{new edge style 0}=[->, dashed, draw=blue]
\tikzstyle{new edge style 1}=[->, dashed, draw={rgb,255: red,0; green,128; blue,128}]
\tikzstyle{freccia}=[->]
\providecommand*{\Dashv}{%
  \mathrel{%
    \mathpalette\@Dashv\vDash
  }%
}
\newcommand*{\@Dashv}[2]{%
  \reflectbox{$\m@th#1#2$}%
}
\newtheorem{theorem}{Theorem}[section]
\newtheorem{proposition}[theorem]{Proposition}
\theoremstyle{remark}
\newtheorem{remark}[theorem]{Remark}
\newtheorem{example}[theorem]{Examples}
\theoremstyle{definition}
\newtheorem{definition}[theorem]{Definition}
\newcommand{\tbf}[1]{\textbf{\emph{#1}}}
\def\mA{\mathcal{A}}
\def\mC{\mathcal{C}}
\def\mD{\mathcal{D}}
\def\mF{\mathcal{F}}
\def\mI{\mathcal{I}}
\def\mG{\mathcal{G}}
\def\mW{\mathcal{W}}
\def\Int{\mathbf{Int}}   %interior algebras
\def\Set{\mathbf{Set }}  %set
\def\Rel{\mathbf{Rel }}  %category of relations
\def\Drg{\mathbf{Drg}}    %category of directed graph
\def\ParDrg{\mathbf{ParDrg}}    %category of directed graph and partial maps
\def\nuovopf{\Psi}    %new presheaf
\def\op{\operatorname{ op}}
\def\ModalHyp{\mathbf{MoHyp}}
\newcommand{\presheafcat}[1]{[#1^{\op},\Set]}
\newcommand{\relpresheafcat}[1]{[#1^{\op},\Rel]}
\newcommand{\Subobjectdoctrine}{\mathrm{Sub}}
\newcommand{\modaldoctrine}[2]{\xymatrix{#2 \colon #1^{\op}  \ar[r] & \Int }}
\newcommand{\presheaf}[2]{\xymatrix{#2 \colon #1^{\op}  \ar[r] & \Set }}
\newcommand{\relpresheaf}[2]{\xymatrix{#2 \colon #1^{\op}  \ar[r] & \Rel }}
\newcommand{\powersetobject}[1]{\mathrm{P} #1}
\newcommand{\membpredicate}[1]{\in_{#1}}
\newcommand{\freccia}[3]{\xymatrix{#2 \colon #1  \ar[r] &  #3}}
\newcommand{\frecciasopra}[3]{\xymatrix{ #1  \ar[r]^{#2} &  #3}}
\newcommand{\duemorfismo}[6]{\xymatrix{
#1^{\op} \ar[rrd]^#2_{}="a" \ar[dd]_{#3^{\op}}\\
&& \Int\\
#5^{\op}  \ar[rru]_#6^{}="b"
\ar_{#4}  "a";"b"}}
\newcommand{\relazione}[3]{\xymatrix@-0.8pc{#2 \colon #1  \ar[rr]\ar@{-|}[r]& &  #3}}
\def\sq{\square}  %square
\def\dm{\Diamond} %diamond
\def\nextoperator{\mathsf{O}}
\def\untiloperator{\mathsf{U}}
\def\wuntiloperator{\mathsf{W}}
\def\Einv{\Sigma}
\def\Alg{\mathcal{A}}
\def\att{\mathcal{T}}   %attribute
\def\relatt{\mathcal{R}} %rel att
\def\equalitypredicate{\Theta}
\def\modallanguage{\mathcal{L}}
\def\catcontext{\mathbf{C}_{\modallanguage}}
\def\modalsystem{\mathbf{S}}
\def\syntdoct{\mathcal{A}_{\modalsystem}}
\newcommand{\interpl}{\ensuremath{ [ \hspace{-0.45ex} |}} 
\newcommand{\interpr}{  \ensuremath{ | \hspace{-0.45ex}]}} 
\newcommand{\interp}[2]{\interpl #1 \interpr^{#2}}
\def\model{\mathfrak{M}}
\def\Relmodel{\Tempmodel}
\def\Tempmodel{\mathfrak{T}}
\newcommand{\sortmodel}{\mathfrak{S}_{\Sigma}}
\newcommand{\functionmodel}{\mathfrak{F}_{\Sigma}}
\def\id{\mathrm{id}}
\newcommand{\angbr}[2]{\langle #1,#2 \rangle}
\newcommand{\MSalgebra}[1]{\mathrm{#1}}
\newcommand{\frecciaparziale}[3]{\xymatrix{#2 \colon #1  \ar@{-^{>}}[r] &  #3}}
\newcommand{\context}[3]{[#1_1:#2_1,\dots,#1_{#3}:#2_{#3}]}
\newcommand{\termincontext}[3]{#1:#2 \; [#3]}
\newcommand{\SynCat}[1]{\mathsf{Con} (#1)}
\newcommand{\ModelCat}[2]{#1\mbox{-}\mathsf{model}(#2)}
\newcommand{\FPfunctors}{\mathsf{FP}}    %preserving-product functor cat
\newcommand{\ContModelCat}[2]{\mathsf{count} \mbox{-}#1\mbox{-}\mathsf{model}(#2)}
\newcommand{\powerelpresheaf}[1]{\mathrm{P}(#1)}
\newcommand{\cammini}[2]{\mathsf{path}(#1,#2)}
\newcommand{\domain}[1]{\mathsf{dom}(#1)}
\newcommand{\codomain}[1]{\mathsf{cod}(#1)}
\begin{document}

\title{A Presheaf Semantics\\ for Quantified Temporal Logics\thanks{
%The research has been partially supported by the Italian MIUR  
%project PRIN 2017FTXR7S \emph{IT-MATTERS (Methods and Tools for Trustworthy Smart Systems)}.
Research partially supported by the Italian MIUR project PRIN 2017FTXR7S “IT-MaTTerS”.}}

\author{Fabio Gadducci\inst{}\orcidID{0000-0003-0690-3051} \and
Davide Trotta\inst{}\orcidID{0000-0003-4509-594X } }

\institute{University of Pisa, Department of Computer Science, Pisa, Italy\\
\email{fabio.gadducci@unipi.it, trottadavide92@gmail.com}}

\authorrunning{F. Gadducci and D.~Trotta}

\maketitle

\begin{abstract}
Temporal logics stands for a widely adopted family of formalisms for the verification of computational devices, enriching propositional logics by operators predicating on the step-wise behaviour of a system. 
Its quantified extensions allow
%introduces first- and second-order quantifiers, in order 
to reason on the properties of the individual components of the system at hand. The expressiveness of the resulting logics poses problems in correctly identifying a semantics that exploits its features without resorting to the imposition of restrictions on the acceptable behaviours.
In this paper we address this issue by means of counterpart models 
%(i.e. generalised transition systems) 
and relational presheaves. 
%ho messo relational presheaves al posto di hyperdoctrines perchè di fatto non diamo mai esplicitamente nel lavoro  la definizione di hyperdottrina in tutti i dettagli, ma lavoriamo direttamente con i prefasci relazionali (certo, sappiamo che formano una hyperdoctrine, però ecco a livello narrativo non lo rimarchiamo troppo questo fatto. usiamo direttamente le proprietà). non so, magari a mettere sia nel titolo che nelle parole chiave "hyperdottrina" un lettore si può aspettare di trovare la parola molto spesso nel testo, mentre invece quella che compare sempre è relational presheaves. Poi dimmi tu se ti convince o se lasceresti hyperdoctrine.

\keywords{Quantified temporal logics  \and Counterpart semantics \and Relational presheaves}
\end{abstract}

\section{Introduction}

% ampio uso delle logiche temporali
The words ``temporal logics'' stand for a widely adopted family of formalisms for the specification and verification of computational devices, ranging from stand-alone programs to large-scale  systems, which find applications in diverse areas such as synthesis, planning and knowledge representation, see e.g.~\cite{Reif85,Cardelli2002} among many others. 
%several branches of computer science and related ﬁelds — such as databases, speciﬁcation and veriﬁcation, synthesis of programs, temporal planning, temporal knowledge representation — have had a huge inﬂuence, and 
%several branches of computer science and related ﬁelds — such as databases, speciﬁcation and veriﬁcation, synthesis of programs, temporal planning, temporal knowledge representation — have had a huge inﬂuence, and 
%the use of temporal logic in some of diverse areas such as speciﬁcation and veriﬁcation, synthesis of programs, temporal planning has developed to the point of commercial application. %
Usually, these logics have a propositional fragment at the core, which is extended by operators predicating on the step-wise behaviour of a system. 
The framework proved extremely effective, and after the foundational work carried out since Pnueli’s seminal paper~\cite{Pnueli77}, 
research focused on the 
development of techniques for the verification of properties specified via such logics, see e.g.~\cite{BaldanCorradini2006,Meseguer2008} for their emphasis on (graph) rewriting.
%,Bae2010,Vandin2011}.

The standard model for temporal logics is represented by transition systems (also known as Kripke frames in logical jargon), where the states represent possible configurations of a device and the transition relation their possible evolution. Most often, the states have some (algebraic) structure and one is interested not just on the topology of each reachable state, but on the fate of its individual items as well. To this end, 
the use of quantified temporal logics have been advocated. A typical example are graph logics, where the states are graphs (i.e. algebras of a graph signature) and the transition relation is given by a family of (partial) graph morphisms~\cite{COURCELLE199012,COURCELLE97,DAWAR2007263}.
%COURCELLE20001,
Unfortunately, such logics are in general not decidable~\cite{Franconi2003FixpointEO,Hodkinson2001} and, as a consequence, many efforts have been devoted to the definition of logics (or the identification
of fragments) that sacrifice expressiveness in favour of computability and efficiency.

Besides these practical considerations, notice that also the characteristics of the semantical models for such logics are not clearly cut. Consider for instance a model with two states $s_0$, $s_1$, a transition from $s_0$ to $s_1$ and another transition going backward, and an item $i$ that appears in $s_0$ only. Is item $i$ being destroyed and (re)created again and again? Or is it just an identifier
that is being reused? The issue is denoted in the literature as the \emph{trans-world identity problem} (see \cite{Hazen1979} as well as \cite{Belardinelli2006QuantifiedML} for a survey of the related philosophical issues). An often adopted solution is to choose a set of universal items, which are used to form each state. It is then obvious how it is possible to refer to the
same element across states. 
However, these solutions are not perfectly suited to model systems with dynamic allocation and
deallocation of components. Consider again the above example. The problem is that item $i$ belongs to the
universal domain, and hence it is exactly the same after every deallocation in state $s_1$. But intuitively, every instance of $i$ should instead be considered to be distinct (even if syntactically equivalent). 

The solution advanced by Lewis~\cite{CounterpartTheoryLewis} is based on what is called the counterpart paradigm: instead of a universal set of items, states are connected by (possibly partial) morphisms, so that state items are local and state evolution may account for their deallocation and
merging.
%
% ragioni per introdurre le versioni quantificate
% problemi posti dalla semantica
% step 1. modelli counterpart
However, as far as we are aware, the use of counterpart semantics for quantified temporal or modal logics has been sparse, and even more so the attempts for their categorical presentation.
The paper builds on the set-theoretical description of a counterpart semantics for a modal logic with second-order quantifiers introduced in~\cite{CounterpartSemanticsGLV}, with the goal of explaining how that model admits a natural generalization and presentation in a categorical setting, and how it can be adapted to
offer a counterpart semantics for (linear time) temporal logics with second-order quantifiers.

From a technical perspective, the starting point for our semantics was the hyperdoctrine presentation of first-order logics, as originally described by Lawvere in his work on categories with equational structure~\cite{Lawvere1969,Lawvere1969b}. More precisely, the direct inspiration was the presheaf model for modal logics with first-order quantifiers presented in~\cite{GhilardiMeloni1988}. Our work extends and generalizes the latter proposal in a few directions.
First of all, the focus on temporal logics, with an explicit operator for next step, required to tweak the original proposal by Ghilardi and Meloni by equipping our models with a chosen family of arrows, which represent the basics steps of a system. Furthermore, the choice of a counterpart semantics forced the transition relation between worlds to be given by families of partial morphisms between the algebras forming each world: this was modelled by using relational presheaves, instead of functional ones, as the main tool. Related to this, tackling second-order quantification required additional effort since relational presheaves do not form a topos in general \cite{Niefield2010}.

%hence there is no power-object for representing predicates. 

%There is a long tradition of 
%counterpart semantics for quantified 
%modal logics.
% step 2. soluzione algebrica mediante iper-dottrine

The paper has the following structure. Section~2 recalls some basic properties of multi-sorted algebras, which describe the  structure of our worlds. Section~3 presents relational presheaves, showing how the latter capture a generalised notion of transition system, where a transition step is given by a partial homomorphism, and how they support suitable second-order operators. Section~4 introduces our logics, a monadic second-order extension of classical linear time temporal logics, and Section~6 finally shows how to provide it with a counterpart semantics, thanks to the categorical set-up of the previous sections. This paper is rounded up by a concluding section and by a running example, highlighting the features of the chosen logics.
\paragraph{Related works.} 
Functional presheaves can be seen as the categorical abstraction of Krikpke frames. This idea was introduced in \cite{GhilardiMeloni1988,GhilardiMeloni1990}. 
Developing this intuition, relational presheaves (see \cite{Niefield2010} for an analysis of the structure of the associated category) can be thought of as a categorification of \emph{counterpart} Kripke frames in the sense of Lewis \cite{CounterpartTheoryLewis}. An in-depth presentation of classical counterpart semantics is in \cite{Hazen1979,Belardinelli2006QuantifiedML}.

In this work we use relational preaheaves to provide a categorical account to the notion of counterpart model defined in \cite{CounterpartSemanticsGLV}, introducing what we call counterpart $\mW$-models. We then specialize our models to deal with temporal logics, equipping them with what we call temporal structures.

As we already mentioned in the introduction, many authors considered quantified temporal logics and have addressed their decidability and complexity issues.
Since most of our examples are motivated by applications to graph rewriting, we just refer to \cite{BaldanCorradini2006,Rensink06}, among others.
Much less explored, though, is the side of categorical semantics.
%However, we stress the fact that, apart the previou
%mentioned works, a categorical approach is missing.
More specifically, we are aware of a topos-theoretical description of a semantics for modal logics in~\cite{Awodey04} and a presentation in terms of Lawvere's doctrines in~\cite{Braner2007FirstorderML}. Moreover, for connections between the areas of coalgebra and modal logic we refer to \cite{jacobs2002,Jacobs2001}.
Notice that the approaches presented in these works generalise in the categorical setting the usual Kripke-style semantics, but not the counterpart one.
%%%%%%%%%%%%%%%%%%%%%%%%%%%%%%%%%%%%%%%%%%%%%%%%%%%%%%%%%%%%%%%%%%%%%

 %referenze per graph transformation logics: \cite{Baldan2001ASA} \cite{Babiani10} \cite{Rensink03} \cite{Gadducci07}
%For instance, the approach of \cite{BaldanCorradini2006} aims at building a verification setting where graph transition systems
%are abstracted into a sort of Petri nets. Another graph-based approach is described in \cite{Rensink03} where formulae of a second-order linear-time logic are evaluated against a kind of graph transition systems where states are labeled directed
%graphs.

%referenze per software model checking \cite{Rensink06} \cite{Boneva07} \cite{Distefano02}
%About software model checking, in \cite{Rensink06} is proposed an extension of CTL \cite{Hodkinson01} with first- and (monadic)
%second-order quantification. The semantics is interpreted over algebra automata, i.e. automata enriched with an algebraic structure of states, and with a morphism-like transition relation that allows
%for renaming elements. The model checking problem over finite automata is shown to be reducible to
%the ordinary model checking of CTL formulae over Kripke structures, while preserving the necessary
%structure to exploit name symmetries. A similar approach can be found in \cite{Distefano02}, but it is based on LTL and it
%includes predicates to reason about allocation, deallocation and reallocation of objects.

\section{Some notions on multi-sorted algebras}
We begin by recalling the definition of many-sorted algebras and their homomorphisms, which lies at the basis of the structure of our worlds.

\begin{definition}
A \tbf{many-sorted signature} $\Sigma$ is a pair $(S_{\Sigma},F_{\Sigma})$ given by a set of \tbf{sorts} $S_{\Sigma}=\{\tau_1,\dots,\tau_m\}$ and by a set $F_{\Sigma}=\{\freccia{\tau_1\times \cdots \times \tau_m}{f_{\Sigma}}{\tau}\; |\; \tau_i, \tau \in S_{\Sigma}\}$ of \tbf{function symbols} typed over $S_{\Sigma}^*$.
\end{definition}

\begin{definition}
A \tbf{many-sorted algebra} $\MSalgebra{A}$ with signature $\Sigma$, i.e. a $\Sigma$-algebra, is a pair $(A,F_{\Sigma}^{\MSalgebra{A}})$ such that
\begin{itemize}
    \item $A$ is a set whose elements are typed over $S_{\Sigma}$;
    \item $F_{\Sigma}^{\MSalgebra{A}}$ is a family of typed functions $\{\freccia{A_{\tau_1}\times \cdots \times A_{\tau_m}}{f_{\Sigma}^{\MSalgebra{A}}}{A_{\tau}}\; |\; f_{\Sigma} \in F_{\Sigma} \wedge \freccia{\tau_1\times\cdots \times \tau_m}{f_{\Sigma}}{\tau}\}$.
\end{itemize}
\end{definition}

Notice that we denoted by $A_{\tau}$ the set $\{a\in A\;|\; a:\tau\}$ of elements of $A$ with type $\tau$.

\begin{definition}
Given two $\Sigma$-algebras $\MSalgebra{A}$ and $\MSalgebra{B}$, a \tbf{(partial) homomorphism}
%of} $\Sigma$\textbf{-algebras}  
$\rho$ is a family of (partial) functions $\rho:=\{\frecciaparziale{A_{\tau}}{\rho_{\tau}}{B_{\tau}}\;|\; \tau\in S_{\Sigma}\}$ typed over $S_{\Sigma}$ such that for every function symbol $\freccia{\tau_1\times\cdots\times \tau_m}{f_{\Sigma}}{\tau}$ and for every list of elements $(a_1,\dots,a_m)$, if $\rho_{\tau_i}$ is defined for the element $a_i$ of type $\tau_i$, then $\rho_{\tau}$ is defined for the element $f_{\Sigma}^{\MSalgebra{A}}(a_1,\dots,a_m)$ and $\rho_{\tau}(f_{\Sigma}^{\MSalgebra{A}}(a_1,\dots,a_m))=f_{\Sigma}^{\MSalgebra{B}}(\rho_{\tau_1}(a_1),\dots,\rho_{\tau_m}(a_m))$.

\end{definition}

\begin{example}[Graph Algebra]\label{esempio sigma algebra grafi}
Let us consider the signature $\Sigma_{Gr}=(S_{Gr},F_{Gr})$ for directed graphs. The set $S_{Gr}$ consists of the sorts of nodes $\tau_N$ and edges $\tau_E$, while the set $F_{Gr}$ is composed by the function symbols
$\freccia{\tau_E}{s,t}{\tau_N}$, which determine, respectively, the source and the target node of an edge. In this case, a $\Sigma_{Gr}$-Algebra $\MSalgebra{G}$ is a directed graph and a homomorphism of $\Sigma_{Gr}$-algebras is exactly a (partial) morphism of directed graphs.
In Figure \ref{Figura grafi } we find the visual representations for three graphs $\mathbf{G}_0$, $\mathbf{G}_1$ and $\mathbf{G}_2$.

\begin{figure}[H]\label{Figura grafi }
\centering

\begin{tikzpicture}[scale=0.40]
	\begin{pgfonlayer}{nodelayer}
		\node [style=new style 0] (0) at (-11, 6) {$n_0$};
		\node [style=new style 0] (1) at (-5, 6) {$n_1$};
		\node [style=new style 0] (2) at (-8, 2) {$n_2$};
		\node [style=new style 1] (3) at (-11, 4) {$e_2$};
		\node [style=new style 1] (4) at (-5, 4) {$e_1$};
		\node [style=new style 1] (5) at (-8, 8) {$e_0$};
		\node [style=new style 0] (6) at (-1, 5) {$n_3$};
		\node [style=new style 0] (7) at (4.75, 5) {$n_4$};
		\node [style=new style 1] (8) at (2, 8) {$e_3$};
		\node [style=new style 1] (9) at (2, 2) {$e_4$};
		\node [style=new style 0] (10) at (11, 6.25) {$n_5$};
		\node [style=new style 1] (13) at (11, 3.5) {$e_5$};
	\end{pgfonlayer}
	\begin{pgfonlayer}{edgelayer}
		\draw (2) to (3);
		\draw (0) to (5);
		\draw (1) to (4);
		\draw [style=freccia] (4) to (2);
		\draw [style=freccia] (3) to (0);
		\draw [style=freccia] (5) to (1);
		\draw (6) to (8);
		\draw (7) to (9);
		\draw [style=freccia] (9) to (6);
		\draw [style=freccia] (8) to (7);
		\draw [bend right=60, looseness=1.50] (10) to (13);
		\draw [style=freccia, bend right=60, looseness=1.50] (13) to (10);
	\end{pgfonlayer}
\end{tikzpicture}

\caption{Three graphs: $\mathbf{G}_0$ (left), $\mathbf{G}_1 $ (middle) and $\mathbf{G}_2$ (right)}
\end{figure}
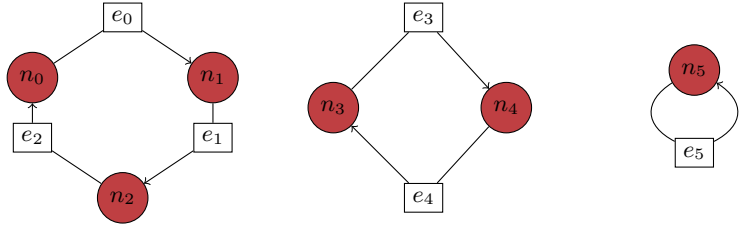
\end{example}
Let $\Sigma$ be a many-sorted signature, and let us fix disjoint, countably infinite sets $X_{\tau}$ of variables for each sort symbol of $S_{\Sigma}$. We let $x:\tau$ indicate an element $x$ of the set $X_{\tau}$. In order to introduce the notion of term, we take into account signatures $\Sigma_X$ obtained by extending a many-sorted signature $\Sigma$ with a denumerable set $X$ of variables typed over $S_{\Sigma}$. This allows to introduce the notion of \emph{open terms} over $\Sigma_X$ in the usual way. 
\begin{definition}
Let $\Sigma$ be a many-sorted signature and $X$ a denumerable set of individual variables typed over $S_{\Sigma}$. The many-sorted set $T(\Sigma_X)$ of \tbf{terms} obtained from $\Sigma_X$ is the smallest such that 
\begin{prooftree}
\AxiomC{}
\UnaryInfC{$X\subseteq T(\Sigma_X)$}
\end{prooftree}
\begin{prooftree}
\AxiomC{$ \freccia{\tau_1\times \cdots \times \tau_m}{f_{\Sigma}}{\tau}\in F_{\Sigma}$}
\AxiomC{$t_i:\tau_i\in T(\Sigma_X)$ for $i=1,\dots,m$}
\BinaryInfC{$f_{\Sigma}(t_1,\dots,t_m):\tau \in T(\Sigma_X)$}
\end{prooftree}

%if $X\subseteq T(\Sigma_X)$, and if $\freccia{\tau_1\times,\cdots \tau_m}{f_{\Sigma}}{\tau}$ is a function symbol of $F_{\Sigma}$, and %$t_1,\dots,t_m$ are terms of sorts $\tau_1,\dots,\tau_m$ respectively, then $f_{\Sigma}(t_1,\dots,t_m):\tau \in T(\Sigma_X)$.
\end{definition}

Finally, we recall the notion of \emph{context} and \emph{term-in-context} over a signature $\Sigma_X$. 
\begin{definition}
A \tbf{context} $\Gamma$ over a signature $\Sigma_X$ is a finite list $\context{x}{\tau}{n}$ of (variable, sort)-pairs, subject to the condition that $x_1,\dots,x_n$ are \emph{distinct}.
\end{definition}
A \emph{term-in-context} takes the form $ \termincontext{t}{\tau}{\Gamma}$
where $t$ is a term, $\tau$ is a sort, and $\Gamma$ is a context over the given signature $\Sigma_X$. The well-formed terms-in-context are inductively generated by the two rules
%\marginpar{stessa riga?}

\begin{prooftree}
\AxiomC{}
\UnaryInfC{$\termincontext{x}{\tau}{\Gamma',x:\tau, \Gamma}$}
\end{prooftree}
\begin{prooftree} 
\AxiomC{$\termincontext{t_1}{\tau_1}{\Gamma}\cdots  \termincontext{t_m}{\tau_m}{\Gamma}$}
\UnaryInfC{$\termincontext{f_{\Sigma}(t_1,\dots,t_m)}{\tau}{\Gamma}$}
\end{prooftree}

\noindent
where $\freccia{\tau_1\times\cdots\times \tau_m}{f_{\Sigma}}{\tau}$ is a function symbol of $F_{\Sigma}$.

\begin{definition}
%[Syntactic category]
Given a many-sorted signature $\Sigma$, its \tbf{syntactic category} or category of contexts $\SynCat{\Sigma}$ is defined as follows
\begin{itemize}
    \item its objects are $\alpha$-equivalence class of contexts;
\item a morphism $\freccia{\Gamma}{\gamma}{\Gamma'}$, where $\Gamma'=\context{y}{\tau}{m}$, is specified by an equivalence class of lists of the form $\gamma=[t_1,\dots,t_m]$ of terms over $\Sigma_X$ such that $\termincontext{t_i}{\tau_i}{\Gamma}$ holds for $i=1,\dots,m$.
\end{itemize} 
The composition of two morphisms is formed by making substitutions.
\end{definition}

Recall that two contexts are $\alpha$-equivalent if they differ only in their variables; in other words, the list of sorts occurring in each context
are equal (and in particular, the contexts are of equal length).
Requiring the objects of the syntactic category to be $\alpha$-equivalence classes of contexts ensures that the category $\SynCat{\Sigma}$ has finite products, given by the concatenations of contexts. As for morphisms, the equivalence relation we consider is the one identifying two terms if they are equal up to substitution.

\section{A categorical perspective of counterpart models} 
Kripke  semantics  is  widely  used  to  assign  a  meaning  to  modal  languages \cite{HandbookModalLogic2007};  it  stems  from  Leibniz’s  intuition  of  defining  necessity  as  truth  in  every  possible  world. We start this section recalling the notion of \emph{Kripke frame}, widely used
in the analysis of modality, and its development into counterpart semantics.
\begin{definition}
%[Kripke frame]
A \tbf{Kripke frame} is a 4-tuple $\angbr{W,R}{D,d}$ defined as
\begin{itemize}
    \item $W$ is a non-empty set;
    \item $R$ is a binary relation of $W$;
    \item $D$ is a function assigning to every $w\in W$ a non-empty set $D(w)$ such that if $wRw'$ then $D(w)\subseteq D(w')$;
    \item $d$ is a function assigning to every $w\in W$ a set $d(w)\subseteq D(w)$.
\end{itemize}
\end{definition}
The set $W$ is intuitively interpreted as the domain of \emph{possible worlds}, whereas $R$ is the
\emph{accessibility relation} among worlds. Each \emph{outer domain} $D(w)$ contains the objects of which
it makes sense to talk about in $w$, while each inner domain
$d(w)$ contains the \emph{individuals} actually existing in $w$.

Notice that Kripke frames are usually assumed to satisfy the \emph{increasing outer domain condition}, i.e. for all $w,w'\in W$, if $wRw'$ then $D(w)\subseteq D(w')$, but this condition could represent a strong, and not enough  motivated, constraint both philosophically and from the point of view of applications. Over the years, this condition has been the heart of several discussions and controversies.

%The counterpart theory was introduced by D. Lewis in \cite{CounterpartTheoryLewis} as a first-order calculus, and the key points of his proposal are the \emph{notion of counterpart}, which is a consequence of Lewis’ refusal to interpret the relation of trans-world sameness as strict identity.

%Kripke frames $\angbr{W,R}{D,d} $ are usually assumed to satisfy the \emph{increasing outer domain condition}, i.e. for all $w,w'\in W$, if $wRw'$ then $D(w)\subseteq D(w')$, but this condition could represent a strong, and not enough  motivated, constrain philosophically, but also from the point of view of applications.

In particular, Lewis denied the possibility of identifying the same individual across possible worlds, and he substituted the notion of trans-world identity with a \emph{counterpart relation} $C$, introducing counterpart theory \cite{CounterpartTheoryLewis}.
In order to assign a meaning to terms \emph{necessary} and \emph{possible} according to Lewis’ counterpart theory, Kripke frames are enriched by a function $C$ such that for all $w,w\in W$, $C_{w,w'}$ is a relation on $D(w)\times D(w')$, intuitively interpreted as the counterpart relation.
\begin{definition}
%[Counterpart frame]
A \tbf{counterpart frame} is a 5-tuple $\angbr{W,R}{D,d,C}$ defined as
\begin{itemize}
    \item  $W,R,D,d$ are defined as for Kripke frames, but $D$ need not to satisfy the increasing outer domain condition;
    \item $C$ is a function assigning to every 2-tuple $\angbr{w}{w'}$ a subset of $D(w)\times D(w')$.
\end{itemize}
\end{definition}
As anticipated, Kripke-like solutions are not perfectly suited to model, for example, systems with \emph{dynamic allocation and
deallocation} of components. This will be clearer later, when we present Example \ref{esempio deallocation}.

Alternative solutions based on counterpart relations are e.g. introduced in \cite{CounterpartSemanticsGLV}, where the authors introduce a novel approach to the semantics of quantified $\mu$-calculi, considering a sort of
labeled transition systems as semantic domain (called counterpart models), where states are algebras
and transitions are defined by counterpart relations (a family of partial homomorphisms) between
states.

We conclude this section recalling the notion of counterpart model from \cite{CounterpartSemanticsGLV}.
\begin{definition}
%[Counterpart model]
\label{def counterpart model}
Let $\Sigma$ be a many-sorted signature and $\Alg$ the set of algebras over $\Sigma$. A \tbf{counterpart model} is a triple $\model:=\angbr{W}{\rightsquigarrow,d}$ such that
\begin{itemize}
    \item $W$ is a set of worlds;
    \item $\freccia{W}{d}{\Alg}$ is a function assigning to each world $\omega\in W$ a $\Sigma$-algebra;
    \item $\rightsquigarrow\subseteq (W\times (\Alg \rightharpoonup \Alg)\times W)$ is the \emph{accessibility relation} over $W$ enriched with  (partial) homomorphisms (\emph{counterpart relations}) between the algebras of the connected worlds.
\end{itemize}
\end{definition}
The elements of $\rightsquigarrow$ are defined such that for every $(w_1 , cr, w_2 ) \in \rightsquigarrow $ we have that $cr : d(w_1 ) \rightharpoonup d(w_2 )$ is a (partial) homomorphism. In particular, each component $cr$ of $\rightsquigarrow$  explicitly defines the counterparts in (the algebras assigned to) the target world of (the algebras assigned to) the source world.

The intuition is that we are considering a transition system labeled with morphisms between
algebras, as an immediate generalization of graph transition systems. The counterpart relations allow to avoid the trans-world identity, i.e. the implicit identification of elements of (the algebras of) different worlds sharing the same name. As a consequence, the name of the elements has a meaning local to the belonging world. For this reason, as we will see, the counterpart
relations allow for the creation, deletion, renaming and merging of elements in a type-respecting way.
\subsection{Relational presheaves models}
The main goal of this section is to explain how the counterpart model introduced in \cite{CounterpartSemanticsGLV} admits a natural generalization and presentation from a categorical setting.

%We start by introducing a presheaves model generalizing counterpart models whose counterpart relation is give by ordinary functions, and then we show how this notion can be generalized when we consider arbitrary counterpart relations.

To fix the notation, and in order to provide the intuition behind the models we introduce, we  briefly recall from \cite{GhilardiMeloni1988} an interesting presentation of the notions of \emph{world}, \emph{process} and \emph{individuals} arising from modal logic, in terms of presheaves.

In particular, given a small category $\mW$, its objects $\sigma,\omega, \rho, \dots$ can be considered as \emph{worlds} or \emph{instants of time}, and the arrows $\freccia{\sigma}{f}{\omega}$ of the category represent \emph{temporal developments} or \emph{ways of accessibility}. 
 Notice that in the usual notion of Kripke-frame the accessibility relation $R$ is simply required to be a \emph{binary relation} on the set $W$ of worlds. In particular, this means that two worlds are connected by \emph{at most one temporal development} or that there is \emph{at most one way to pass from a given world to another}. This is a constraint that we would like to avoid. The natural generalization of considering a set of worlds $W$ and a relation $R$ is considering a category $\mW$. 
 
From this perspective, a presheaf $\presheaf{\mW}{D}$ assigns to every world $\omega$ the set $D_{\omega}:=D(\omega)$ of its \emph{individuals}, and to a temporal development $\freccia{\sigma}{f}{\omega}$ a function $\freccia{D_{\omega}}{D_f}{D_{\sigma}}$ between the individuals living in the worlds $\omega$ and $\sigma$. Therefore, if we consider two elements $a\in D_{\omega}$ and $b\in D_{\sigma}$, the equality $b=D_f(a)$ can be read as \emph{a is a future development with respect to $f$ of b}.

In other words, \emph{the notion of presheaf represents the natural categorification of the notion of counterpart frame whose counterpart relation is functional}. In particular, it is direct to check that, given  presheaf $\presheaf{\mW}{D}$, one can define a counterpart frame $\angbr{W,R}{D,d,C}$ as follows
\begin{itemize}
    \item $W$ is given by the objects of the category $\mW$;
    \item $\omega R\sigma $ if and only if there exists at least an arrow $\omega \to \sigma$ of $\mW$;
    \item $d$ and $D$ coincide and are given by the action of the presheaf $\presheaf{\mW}{D}$ on the objects of $\mW$, i.e. $D(\omega)=D(\omega)=D_{\omega}$;
    \item  $C$ assigns to a given 2-tuple $\angbr{\omega}{\omega'}$ the subset of $D(\omega)\times D(\omega')$  whose elements are pairs $\angbr{a}{a'}$ such that $a\in D(\omega)$, $a'\in D(\omega')$ and there exists an arrow $\freccia{\omega}{f}{\omega'}$ such that $a=D_f(a')$.
    
\end{itemize}
\iffalse
Therefore, we have seen the link between counterpart frame (whose counterpart relation is functional) and presheaf. Now we introduce the new notion of $\mW$-model and we explain how this notion is a categorification of the notion of counterpart model in the sense of \cite{CounterpartSemanticsGLV}.

\begin{definition}[$\mW$-model]\label{def W-model}
Let $\Sigma$ be a many-sorted signature. A \tbf{$\mW$-model} is a triple $\model=(\mW,\sortmodel,\functionmodel)$ such that
\begin{itemize}
    \item $\mW$ is a category of worlds;
    \item $\sortmodel=\{\presheaf{\mW}{\interp{\tau}{\model}}\}_{\tau\in S_{\Sigma}}$ is a set of preshaves on $\mW$, indexed on $S_{\Sigma}$;
    \item $\functionmodel=\{\freccia{\interp{\tau_1}{\model}\times \cdots \times \interp{\tau_m}{\model}}{\mI(f_{\Sigma})}{\interp{\tau}{\model}}\}_{f_{\Sigma}\in F_{\Sigma}}$ is a set of natural transformations of presheaves.
\end{itemize}
\end{definition}
\fi

Thus, we have seen that the choice of presheaves for the counterpart semantics is quite natural, but it comes with some restrictions: one of the main reasons why such semantics has been introduced is not only to avoid the increasing outer domain condition, but more generally to avoid the constraint that every individual of a world $w$ has to admit a counterpart in every world connected to $w$. Presheaves clearly are not subject to the increasing outer domain condition but if we consider a temporal development $\freccia{\omega}{f}{\sigma}$, and being $D_f$ a total function, we have that for every individual of $D_{\sigma}$ there exists a counterpart in $D_{\omega}$. This forces that an individual $t$ living of the world $\sigma$ \emph{necessarily has a counterpart} in the world $\omega$ with respect to the development $\freccia{\omega}{f}{\sigma}$.  So, to fully abstract the the idea of counterpart semantics in categorical logic and the notion of counterpart frame, we have to consider  the case in which 
\[\relazione{D_{\sigma}}{D_f}{D_{\tau}}\]
is an arbitrary relation. Therefore, in this context considering an element $\angbr{a}{b}\in D_f$ can be read as \emph{$a$ is the future counterpart with respect the development $f$ of $b$.} Therefore, we recall the notion of \emph{relational presheaf}.

%As we have seen in Remark \ref{remark problemi prefasci classici}, the choice of presheaves for the counterpart semantic is quite natural, but it carries on some restrictions: given a presheaf $\presheaf{\mC}{D}$, and a temporal development $\freccia{\sigma}{f}{\tau}$, i.e. an arrow $f$ of $\mC$, the arrow $D_f$ represents a connection between individuals living in $D_{\tau}$ and $D_{\sigma}$, but since $D_f$ is totally defined on $D_{\tau}$, 

%This forces that an individual $t$ living of the world $\tau$ \emph{necessarily exists} also in the world $\sigma$ with respect to the development $\freccia{\sigma}{f}{\tau}$. This means that individuals cannot be born from nothing or cannot die. 

\begin{definition}
%[Relational presheaf]
A \tbf{relational presheaf} is a functor $\relpresheaf{\mC}{X}$, where $\Rel$ denotes the category of sets and relations.
\end{definition}
The generality of the notion of relational presheaf allows for example to deal with \emph{partial functions}, and to avoid the previously described situations. 

Relational presheaves form a category with finite limits if as morphisms we consider the families of set functions $\psi:=\{\freccia{X_{\sigma}}{f_{\sigma}}{U_{\sigma}}\}_{\sigma\in \mC}$ such that for every $\freccia{\sigma}{g}{\tau}$ of the base category, we have that \[\angbr{t}{s}\in X_{g}\Rightarrow\angbr{f_{\tau}(t)}{f_{\sigma}(s)}\in U_{g}\]
where $s\in X_{\sigma}$ and $t\in X_{\tau}$.
We call this kind of morphisms \tbf{relational morphisms}, and we denote the category of relational presheaves and relational morphisms as $\relpresheafcat{\mW}$.

%Now we generalize Definition \ref{def W-model} using relational presheaves. %instead of functional ones.
Thus, we have seen the link between counterpart frames and relational presheaves. Now we introduce the new notion of counterpart $\mW$-model and we explain how this notion is a categorification of the notion of counterpart model in the sense of \cite{CounterpartSemanticsGLV}.

\begin{definition}
%[Counterpart $\mW$-model]
Let $\Sigma$ be a many-sorted signature. A \tbf{counterpart $\mW$-model} is a triple $\Relmodel=(\mW,\sortmodel,\functionmodel)$ such that
\begin{itemize}
    \item $\mW$ is a category of worlds;
    \item $\sortmodel=\{\relpresheaf{\mW}{\interp{\tau}{\Relmodel}}\}_{\tau\in S_{\Sigma}}$ is a set of relational preshaves on $\mW$, indexed on $S_{\Sigma}$;
    \item $\functionmodel=\{\freccia{\interp{\tau_1}{\Relmodel}\times \cdots \times \interp{\tau_m}{\Relmodel}}{\mI(f_{\Sigma})}{\interp{\tau}{\Relmodel}}\}_{f_{\Sigma}\in F_{\Sigma}}$ is a set of morphisms of relational presheaves.
\end{itemize}
\end{definition}

\begin{definition}
Let us consider a signature $\Sigma$ and a category $\mW$ of worlds. We define the \tbf{category of counterpart $\mW$-models}, denoted by $\ContModelCat{\mW}{\Sigma}$, as the category whose objects are counterpart $\mW$-models $\Relmodel$, and whose morphisms $\freccia{\Relmodel}{F}{\Relmodel'}$ are families of morphisms  $F:=\{\freccia{\interp{\tau}{\Relmodel}}{F_{\tau}}{\interp{\tau}{\Relmodel'}}\}_{\tau\in S_{\Sigma}}$ of relational presheaves, commuting with the relational morphisms of $\functionmodel$ and $\functionmodel'$, i.e. $F_{\tau}\circ \mI(f_{\Sigma})= \mI'(f_{\Sigma})\circ (F_{\tau_1}\times \cdots\times F_{\tau_m})$ for all function symbols  $\freccia{\tau_1\times\cdots\times \tau_m}{f_{\Sigma}}{\tau}\in F_{\Sigma}$ of the signature.
\end{definition}
The notion of  counterpart $\mW$-model admits a clear interpretation also from the categorical perspective of functorial semantics. In details, observe that, by definition, a counterpart $\mW$-model $\Relmodel=(\mW,\sortmodel,\functionmodel)$ assigns to every sort $\tau$ of the signature $\Sigma$ a relational presheaf $\relpresheaf{\mW}{\interp{\tau}{\Relmodel}}$ and to every function symbol $f_{\Sigma}$ a morphism of relational presheaves $\freccia{\interp{\tau_1}{\Relmodel}\times \cdots \times \interp{\tau_m}{\Relmodel}}{\mI(f_{\Sigma})}{\interp{\tau}{\Relmodel}}$. Therefore a counterpart $\mW$-model can be represented as a preserving finite products functor $\freccia{\SynCat{\Sigma}}{F_{\mW}}{\relpresheafcat{\mW}}$ from the syntactic category $\SynCat{\Sigma}$ into the category $\relpresheafcat{\mW}$ of $\mW$-presheaves.

Similarly, every such functor $\freccia{\SynCat{\Sigma}}{F_{\mW}}{\relpresheafcat{\mW}}$ induces a counterpart $\mW$-model. It is thus straightforward to check that the following result holds.

%Notice that as we have explained in the previous section for the case of ordinary presheaves models, the notion of counter-part $\mW$-model can be presented as a functor from the category of contexts $\SynCat{\Sigma}$ of a given signature $\Sigma$ into the category $\relpresheafcat{\mW}$  of relational presheaves. Therefore we obtain the relational version of Theorem \ref{theorema  functorial semantic prefasci}.
\begin{theorem}
For every category $\mW$, the category $\ContModelCat{\mW}{\Sigma}$ of counterpart $\mW$-models is equivalent to the category $\FPfunctors (\SynCat{\Sigma},\relpresheafcat{\mW})$  of finite product preserving functors and natural transformations from the syntactic category $\SynCat{\Sigma}$ to $\relpresheafcat{\mW}$ of relational presheaves over $\mW$.
\end{theorem}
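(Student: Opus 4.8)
The plan is to mirror, essentially verbatim, the proof of Theorem~\ref{theorema  functorial semantic prefasci}: the only property of $[\mW^{\op},\Set]$ used there is that it is a category with finite products, and the same holds for $\relpresheafcat{\mW}$, since, as recalled above, relational presheaves and relational morphisms form a category with finite limits. The one point that deserves a remark is how binary products are computed there: given relational presheaves $\relpresheaf{\mW}{X}$ and $\relpresheaf{\mW}{U}$, their product has $(X\times U)_\sigma=X_\sigma\times U_\sigma$ (cartesian product of the underlying sets) and, for $\freccia{\sigma}{g}{\tau}$, the relation $\langle(t,t'),(s,s')\rangle\in (X\times U)_g \iff \langle t,s\rangle\in X_g \wedge \langle t',s'\rangle\in U_g$; the projections are the obvious set-theoretic ones, which are relational morphisms, and a pairing $\langle f,h\rangle$ of relational morphisms is again a relational morphism since the compatibility condition is conjunctive. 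Thus finite products in $\relpresheafcat{\mW}$ behave exactly as in the functional case, and this is what makes the transfer of the earlier argument routine.

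Concretely, I would build two functors. In one direction, $\Phi\colon\ContModelCat{\mW}{\Sigma}\to\FPfunctors(\SynCat{\Sigma},\relpresheafcat{\mW})$ sends a counterpart $\mW$-model $\Relmodel$ to the functor $F_{\Relmodel}$ that acts on a context by $[x_1\!:\!\tau_1,\dots,x_n\!:\!\tau_n]\mapsto \interp{\tau_1}{\Relmodel}\times\cdots\times\interp{\tau_n}{\Relmodel}$ and on a morphism (a list of terms-in-context) by interpreting each term inductively, using the projections of $\relpresheafcat{\mW}$ for variables and the relational morphisms $\mI(f_{\Sigma})$ of $\functionmodel$ for applications of function symbols; on an arrow $F=\{F_\tau\}_{\tau\in S_\Sigma}$ of $\ContModelCat{\mW}{\Sigma}$ it produces the natural transformation whose component at $[x_1\!:\!\tau_1,\dots,x_n\!:\!\tau_n]$ is $F_{\tau_1}\times\cdots\times F_{\tau_n}$. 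In the other direction, $\Psi$ sends an $\FPfunctors$ functor $G$ to the model $\Relmodel_G$ with $\interp{\tau}{\Relmodel_G}:=G([x\!:\!\tau])$ and with $\mI(f_\Sigma)$ obtained by applying $G$ to the generating morphism $[f_\Sigma(x_1,\dots,x_m)]$ and precomposing with the canonical comparison isomorphism $G([x_1\!:\!\tau_1,\dots,x_m\!:\!\tau_m])\cong\prod_i G([x_i\!:\!\tau_i])$ coming from the preservation of finite products; on natural transformations $\Psi$ takes components at the one-variable contexts.

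Then I would check the standard points. First, $F_{\Relmodel}$ is well defined: it is invariant under the chosen $\alpha$-equivalence of contexts (products are taken up to the coherence isos) and under the substitution equations (composition in $\SynCat{\Sigma}$ matches composition in $\relpresheafcat{\mW}$), and it preserves finite products by construction. Second, $\Phi$ and $\Psi$ are functorial; here the crucial lemma is that $\SynCat{\Sigma}$ is the free category with finite products on $\Sigma$, so that a natural transformation between two such FP functors is completely determined by its components at the one-variable contexts, which in turn must commute with the generating morphisms, i.e.\ with the relational morphisms of $\functionmodel$ and $\functionmodel'$ — and conversely any such family extends uniquely to a natural transformation, naturality at an arbitrary morphism following by induction on term structure and naturality at projections forcing the component at $\Gamma$ to be the product of the components at its one-variable factors. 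Third, $\Psi\Phi=\id$ essentially on the nose, while $\Phi\Psi G$ is naturally isomorphic to $G$ via the comparison isomorphisms $\Phi\Psi G(\Gamma)=\prod_i G([x_i\!:\!\tau_i])\cong G(\Gamma)$, which are natural because both functors preserve finite products. This yields the claimed equivalence of categories.

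The one genuinely new step, and the place I would be most careful, is the inductive interpretation of terms: I must verify that the operations used to build $F_{\Relmodel}$ on morphisms — composing relational morphisms, taking projections out of the pointwise product, and pairing — all preserve the (conjunctive) compatibility condition that defines relational morphisms, so that the interpretation of every term-in-context is again a relational morphism rather than merely a family of set maps. Once the description of products in $\relpresheafcat{\mW}$ above is in place this is a short check, and from there the argument is formally identical to the proof of Theorem~\ref{theorema  functorial semantic prefasci}, with $\Rel$ in place of $\Set$ throughout.
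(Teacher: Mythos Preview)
Your proposal is correct and matches the paper's approach: the paper does not give a detailed proof but only observes in the paragraph preceding the theorem that a counterpart $\mW$-model determines a finite-product-preserving functor $\SynCat{\Sigma}\to\relpresheafcat{\mW}$ and conversely, then declares the result ``straightforward to check''; your argument is exactly the standard fleshing-out of that sketch via the universal property of $\SynCat{\Sigma}$ as the free finite-product category on the signature. One caveat: you twice invoke Theorem~\ref{theorema  functorial semantic prefasci}, but that theorem lives inside an \verb|\iffalse...\fi| block and is absent from the compiled paper, so the cross-reference will dangle --- since your argument is in fact self-contained, simply remove those references.
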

Notice that if we want to recover in this categorical framework the original idea behind the notion of counterpart models introduced in \cite{CounterpartSemanticsGLV}, i.e. that every world is sent to a $\Sigma$-algebra, we just need to consider counterpart $\mW$-models $\Relmodel=(\mW,\sortmodel,\functionmodel)$ where every relational presheaf of $\sortmodel$ sends a morphism $\freccia{\omega}{f}{\sigma}$ of  $\mW$ to a relation $\relazione{\interp{\tau}{\Relmodel}_{\sigma}}{\interp{\tau}{\Relmodel}_f}{\interp{\tau}{\Relmodel}_{\omega}}$ whose converse $\relazione{\interp{\tau}{\Relmodel}_{\omega}}{(\interp{\tau}{\Relmodel}_f)^{\dagger}}{\interp{\tau}{\Relmodel}_{\sigma}}$ is a partial function. 

In particular, in the following two results we provide the precise link between counterpart models in the sense of \cite{CounterpartSemanticsGLV}
(see Def. \ref{def counterpart model})
and counterpart $\mW$-models. 

\begin{proposition}\label{remark caso particolare counterpar model GLV}
Les us consider a counterpart $\mW$-model $\Relmodel=(\mW,\sortmodel,\functionmodel)$ such that  every relational presheaf of $\sortmodel$ sends a morphism $\freccia{\omega}{f}{\sigma}$ of  $\mW$ to a relation $\relazione{\interp{\tau}{\Relmodel}_{\sigma}}{\interp{\tau}{\Relmodel}_f}{\interp{\tau}{\Relmodel}_{\omega}}$ whose converse $\relazione{\interp{\tau}{\Relmodel}_{\omega}}{(\interp{\tau}{\Relmodel}_f)^{\dagger}}{\interp{\tau}{\Relmodel}_{\sigma}}$ is a partial function. Then the triple $\model_{\Relmodel}:=\angbr{W_{\Relmodel}}{\rightsquigarrow_{\Relmodel},d_{\Relmodel}}$ where

\begin{itemize}
    \item the \emph{set of worlds} is given by the objects of the category $\mW$, i.e. $W_{\Relmodel}=\mathsf{ob}(\mW)$,
    \item $\freccia{W_{\Relmodel}}{d_{\Relmodel}}{\Alg}$ is a function assigning to each world $\omega\in W_{\Relmodel}$ the $\Sigma$-algebra $d_{\Relmodel}(\omega):=(A_{\omega},F^{A_{\omega}}_{\Sigma})$ where $A_{\omega}:=\bigcup_{\tau\in S_{\Sigma}}\{\interp{\tau}{\Relmodel}_{\omega}\}$ and $F^{A_{\omega}}_{\Sigma}:=\bigcup_{f_{\Sigma}\in F_{\Sigma}}\{\mI(f_{\Sigma})_{\omega}\}$,
    \item for every arrow $\freccia{\omega}{f }{\sigma}$ of $\mW$ we define an element $(\omega,cr_f,\sigma)\in \rightsquigarrow_{\Relmodel}$ where the function $\frecciaparziale{d_{\Relmodel}(\omega)}{cr_f}{d_{\Relmodel}(\sigma)}$ is defined on a given typed element $a\in \interp{\tau}{\Relmodel}_{\sigma}$ as  $cr_f(a):=(\interp{\tau}{\Relmodel}_f)^{\dagger}(a)$
\end{itemize}
is a counterpart model.
%It is easy to check that every $cr_f$ is a partial homomorphism of $\Sigma$-algebras (it follows from the naturality of each $\mI(f_{\Sigma})$).
\end{proposition}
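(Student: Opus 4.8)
The plan is to verify the three clauses of Definition~\ref{def counterpart model} for $\model_{\Relmodel}$; the only non-routine point is that each $cr_f$ is a genuine partial homomorphism of $\Sigma$-algebras, and everything else is bookkeeping about typed carriers. I would start with the easy part. The set of worlds $W_{\Relmodel}=\mathsf{ob}(\mW)$ is trivially a set. Each $d_{\Relmodel}(\omega)=(A_\omega,F^{A_\omega}_\Sigma)$ is a well-defined $\Sigma$-algebra: the carrier $A_\omega=\bigcup_{\tau\in S_\Sigma}\interp{\tau}{\Relmodel}_\omega$ is typed over $S_\Sigma$ (the summands being pairwise disjoint), and, crucially, since $\mI(f_\Sigma)$ is a \emph{morphism of relational presheaves} it is by definition a family of \emph{total set functions}, so each component $\mI(f_\Sigma)_\omega$ gives an admissible, type-respecting interpretation of $f_\Sigma$ on $A_\omega$. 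Likewise, $cr_f$ is a well-defined partial function $d_{\Relmodel}(\omega)\rightharpoonup d_{\Relmodel}(\sigma)$: on the sort-$\tau$ component it is the partial function $(\interp{\tau}{\Relmodel}_f)^{\dagger}$ granted by hypothesis, which lands in $\interp{\tau}{\Relmodel}_\sigma$, and the union over the disjoint sorts is again a sort-respecting partial function. (In passing this corrects the evident slip in the statement: the argument of $cr_f$ lives in $\interp{\tau}{\Relmodel}_\omega$, not $\interp{\tau}{\Relmodel}_\sigma$.)

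The heart of the proof is the homomorphism property of $cr_f$. Fix a function symbol $\freccia{\tau_1\times\cdots\times\tau_m}{f_\Sigma}{\tau}$ and elements $a_i\in\interp{\tau_i}{\Relmodel}_\omega$ on which $cr_f$ is defined, and put $b_i:=cr_f(a_i)=(\interp{\tau_i}{\Relmodel}_f)^{\dagger}(a_i)\in\interp{\tau_i}{\Relmodel}_\sigma$, so that $\angbr{b_i}{a_i}\in\interp{\tau_i}{\Relmodel}_f$ by definition of the converse relation. Finite products in $\relpresheafcat{\mW}$ are computed componentwise (because the morphisms there are ordinary functions), the relation assigned to an arrow being the conjunction of the component relations — this is precisely what makes the projections relational morphisms — whence $\angbr{(b_1,\dots,b_m)}{(a_1,\dots,a_m)}\in(\interp{\tau_1}{\Relmodel}\times\cdots\times\interp{\tau_m}{\Relmodel})_f$. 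Applying the compatibility condition of the relational morphism $\mI(f_\Sigma)$ at the arrow $\freccia{\omega}{f}{\sigma}$ gives
\[\angbr{\mI(f_\Sigma)_\sigma(b_1,\dots,b_m)}{\mI(f_\Sigma)_\omega(a_1,\dots,a_m)}\in\interp{\tau}{\Relmodel}_f,\]
equivalently $\angbr{\mI(f_\Sigma)_\omega(a_1,\dots,a_m)}{\mI(f_\Sigma)_\sigma(b_1,\dots,b_m)}\in(\interp{\tau}{\Relmodel}_f)^{\dagger}$. Since $(\interp{\tau}{\Relmodel}_f)^{\dagger}$ is a \emph{partial function}, this says exactly that $cr_f$ is defined on $\mI(f_\Sigma)_\omega(a_1,\dots,a_m)=f_\Sigma^{d_{\Relmodel}(\omega)}(a_1,\dots,a_m)$ and takes there the value $\mI(f_\Sigma)_\sigma(b_1,\dots,b_m)=f_\Sigma^{d_{\Relmodel}(\sigma)}(cr_f(a_1),\dots,cr_f(a_m))$, which is the required partial-homomorphism equation.

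To conclude I would set $\rightsquigarrow_{\Relmodel}:=\{\,(\omega,cr_f,\sigma)\mid f\colon\omega\to\sigma\text{ an arrow of }\mW\,\}$; by the above it is a subset of $W_{\Relmodel}\times(\Alg\rightharpoonup\Alg)\times W_{\Relmodel}$ each of whose labels is a partial homomorphism between the algebras of the connected worlds, which is exactly what Definition~\ref{def counterpart model} demands, so $\model_{\Relmodel}$ is a counterpart model. The only step requiring real care is the chain of equivalences in the middle paragraph: keeping track of the direction of $\interp{\tau}{\Relmodel}_f$ versus its converse, observing that the product relational presheaf acts on arrows by conjunction of the component relations, and — the one genuine use of the hypothesis — converting membership in $(\interp{\tau}{\Relmodel}_f)^{\dagger}$ into a functional equation, which works only because that converse is assumed to be a partial function.
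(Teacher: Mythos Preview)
Your proof is correct and is precisely the natural verification the paper has in mind; in fact the paper states this proposition without proof, treating it as a routine check (the following line reads ``Similarly, one can directly check the dual result''), so there is nothing to compare against beyond confirming that your argument supplies the intended details. Your identification of the one substantive step --- that the relational-morphism compatibility condition for $\mI(f_\Sigma)$, together with the partial-function hypothesis on $(\interp{\tau}{\Relmodel}_f)^{\dagger}$, yields exactly the partial-homomorphism equation for $cr_f$ --- is spot on, and your remark about the typo in the domain of $cr_f$ is also correct.
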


Similarly, one can directly check the dual result.

\begin{proposition}\label{proposition ogni counterpar model ci da un counterpart W model}
Let $\Sigma:=(S_{\Sigma},F_{\Sigma})$ be a many-sorted signature, and let $\model:=\angbr{W}{\rightsquigarrow,d}$ be a counterpart model. Then the triple $\Relmodel_{\model}=(\mW,\sortmodel,\functionmodel)$ where
\begin{itemize}
    \item the category $\mW$ is the category whose objects are the worlds of $W$ and whose arrows are obtained by defining
    %the set of arrows given by $\rightsquigarrow$ (i.e. 
    for every element $(\omega_1,cr,\omega_2) \in  \rightsquigarrow$ a generating arrow $cr:\omega_1\to\omega_2$,
    \item for a sort $\tau$ of $S_{\Sigma}$ we define the relational presheaf $\relpresheaf{\mW}{\interp{\tau}{\Relmodel_{\model}}}$ by the assignment $\interp{\tau}{\Relmodel_{\model}}_{\omega}:=d(\omega)_{\tau}$ and for a generating arrow $\freccia{\omega_1}{cr}{\omega_2}$ of $\mW$ we define $\interp{\tau}{\Relmodel_{\model}}_{\omega}:=(cr)^{\dagger}_{\tau}$,
    \item for any function symbol $\freccia{\tau_1\times \cdots \times \tau_m}{f_{\Sigma}}{\tau}$ of $\in F_{\Sigma}$ we define the natural transformation $\freccia{\interp{\tau_1}{\Relmodel_{\model}}\times \cdots \times \interp{\tau_m}{\Relmodel_{\model}}}{\mI(f_{\Sigma})}{\interp{\tau}{\Relmodel_{\model}}}$ by the assignment $\mI(f_{\Sigma})_{\omega}:=f_{\Sigma}^{d(\omega)}$
\end{itemize}
is a  counterpart $\mW$-model and every relational presheaf of $\sortmodel$ sends a morphism  $\freccia{\omega}{f}{\sigma}$ of  $\mW$ to a relation $\relazione{\interp{\tau}{\Relmodel_{\model}}_{\sigma}}{\interp{\tau}{\Relmodel_{\model}}_f}{\interp{\tau}{\Relmodel_{\model}}_{\omega}}$ whose converse $\relazione{\interp{\tau}{\Relmodel_{\model}}_{\omega}}{(\interp{\tau}{\Relmodel_{\model}}_f)^{\dagger}}{\interp{\tau}{\Relmodel_{\model}}_{\sigma}}$ is a partial function.
\end{proposition}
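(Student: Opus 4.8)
The plan is to verify, item by item, that the triple $\Relmodel_{\model}=(\mW,\sortmodel,\functionmodel)$ constructed from the counterpart model $\model=\angbr{W}{\rightsquigarrow,d}$ satisfies the three clauses in the definition of a counterpart $\mW$-model, and then to check the extra property about converses being partial functions. First I would make precise the category $\mW$: its objects are the elements of $W$, and its morphisms are freely generated by the arrows $cr\colon\omega_1\to\omega_2$ coming from the triples $(\omega_1,cr,\omega_2)\in\rightsquigarrow$, closing under formal composition and adding formal identities. So a general morphism $\omega\to\sigma$ is a (possibly empty) composable string of generators; I would note that the assignments below must therefore be specified only on generators and identities, and then extended functorially, which is automatic for a freely generated category.

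Next I would check that each $\interp{\tau}{\Relmodel_{\model}}$ is a well-defined functor $\mW^{\op}\to\Rel$. On objects it sends $\omega$ to $d(\omega)_\tau$; on a generating arrow $cr\colon\omega_1\to\omega_2$ it sends it to the relation $(cr)^{\dagger}_{\tau}\colon d(\omega_2)_\tau\to d(\omega_1)_\tau$, i.e.\ the converse of the $\tau$-component of the partial homomorphism $cr$. Since $\Rel$ is a category and we have defined the action on generators, functoriality on composites and identities is forced; the only genuine thing to observe is that the converse of a partial function is indeed a relation, so this lands in $\Rel$ and is contravariant as required. (One should be a little careful that, because $\mW$ is freely generated, there are no relations among composites of generators to check — this is precisely why the free construction is used.)

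Then I would verify that each $\mI(f_{\Sigma})$, defined component-wise by $\mI(f_{\Sigma})_\omega:=f_{\Sigma}^{d(\omega)}$, is a relational morphism $\interp{\tau_1}{\Relmodel_{\model}}\times\cdots\times\interp{\tau_m}{\Relmodel_{\model}}\to\interp{\tau}{\Relmodel_{\model}}$. Unwinding the definition of relational morphism, this amounts to showing that for every generating arrow $cr\colon\omega_1\to\omega_2$ and every tuple, if $\angbr{(a_1',\dots,a_m')}{(a_1,\dots,a_m)}$ lies in the product relation $\bigl(\prod_i\interp{\tau_i}{\Relmodel_{\model}}\bigr)_{cr}$ — i.e.\ $cr_{\tau_i}(a_i)=a_i'$ for all $i$ (using that $cr^\dagger$ is the converse) — then $\angbr{f_{\Sigma}^{d(\omega_2)}(a_1',\dots,a_m')}{f_{\Sigma}^{d(\omega_1)}(a_1,\dots,a_m)}$ lies in $\interp{\tau}{\Relmodel_{\model}}_{cr}$, i.e.\ $cr_\tau\bigl(f_{\Sigma}^{d(\omega_1)}(a_1,\dots,a_m)\bigr)=f_{\Sigma}^{d(\omega_2)}(a_1',\dots,a_m')$. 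But this is exactly the homomorphism condition in Definition~3 applied to the partial homomorphism $cr\colon d(\omega_1)\rightharpoonup d(\omega_2)$: whenever each $cr_{\tau_i}$ is defined on $a_i$, then $cr_\tau$ is defined on $f_{\Sigma}^{d(\omega_1)}(a_1,\dots,a_m)$ and commutes with $f_\Sigma$. So the defining compatibility of a relational morphism is precisely the condition making each $cr$ a partial homomorphism, and it extends to all arrows of $\mW$ since the generators generate. The final clause — that the converse of each $\interp{\tau}{\Relmodel_{\model}}_f$ is a partial function — holds on generators essentially by construction, since $\interp{\tau}{\Relmodel_{\model}}_{cr}=(cr_\tau)^\dagger$ and hence its converse is $cr_\tau$, a partial function; for a general composable string of generators one uses that a composite of partial functions is a partial function and that $(g\circ h)^\dagger=h^\dagger\circ g^\dagger$ in $\Rel$.

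The main obstacle, such as it is, lies in being careful about the free generation of $\mW$: one must make sure that defining the relational presheaves and the function-symbol morphisms only on generating arrows really does yield well-defined functors and relational morphisms, with no hidden coherence to check, and one must track the direction-reversal ($\mW^{\op}$, converses) consistently so that partial homomorphisms going $\omega_1\to\omega_2$ correspond to relations going $d(\omega_2)_\tau\to d(\omega_1)_\tau$. Everything else is a direct unwinding of definitions, with the homomorphism condition of Definition~3 doing the real work in the relational-morphism check.
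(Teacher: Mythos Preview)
Your proposal is correct and is precisely the ``direct check'' that the paper alludes to: the paper does not give a detailed proof of this proposition, merely stating that one can directly verify the dual of the preceding result. Your verification unfolds the definitions exactly as required, with the partial-homomorphism condition of Definition~3 playing the key role in establishing that each $\mI(f_{\Sigma})$ is a relational morphism, and the free generation of $\mW$ ensuring there are no further coherence conditions to check.
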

\begin{remark}
Let $\Sigma:=(S_{\Sigma},F_{\Sigma})$ be a many-sorted signature, and let $\model:=\angbr{W}{\rightsquigarrow,d}$ be a counterpart model. Notice that if we first construct the counterpart $\mW$-model $\Relmodel_{\model}$ employing Proposition~\ref{proposition ogni counterpar model ci da un counterpart W model}, and then we construct the counterpart model $\model_{(\Relmodel_{\model})}$ employing Proposition~\ref{remark caso particolare counterpar model GLV}, we have that the counterpart model $\model_{(\Relmodel_{\model})}$ may be different from
$\model$. As it will be shown in Theorem~\ref{teorema corrispondenza 1-1},
when considering \emph{temporal} structures such a difference will be irrelevant, from a semantic point of view.
\end{remark}

\begin{example}\label{esempio relational pref. su grafi}
Let us consider the ordinary signature $\Sigma_{Gr}=(S_{Gr},F_{Gr})$ for directed graphs introduced in Example \ref{esempio sigma algebra grafi}.
The notion of counterpart $\mW$-model allows us to provide a categorical presentation of the counterpart semantics  presented in \cite{CounterpartSemanticsGLV}. 
In this case, a counterpart $\mW$-model $\Relmodel_{Gr}=(\mW,\mathfrak{S}_{\Sigma_{Gr}},\mathfrak{F}_{\Sigma_{Gr}})$ consists of
\begin{itemize}
    \item a category of worlds $\mW$;
    \item $\mathfrak{S}_{\Sigma_{Gr}}=\{\interp{\tau_N}{\Relmodel},\interp{\tau_E}{\Relmodel}\}$, where $ \interp{\tau_N}{\Relmodel}$and $\interp{\tau_E}{\Relmodel}$ are two relational presheaves on the category $\mW$ of worlds such that $(\interp{\tau_N}{\Relmodel}_f)^{\dagger}$ and $(\interp{\tau_E}{\Relmodel}_f)^{\dagger}$ are partial functions for every arrow $\freccia{\omega}{f}{\sigma}$ of $\mW$;
    \item $\mathfrak{F}_{\Sigma_{Gr}}=\{\mI(s),\mI(t)\}$, where$\freccia{\interp{\tau_E}{\Relmodel}}{\mI(s)}{\interp{\tau_N}{\Relmodel}}$ and $\freccia{\interp{\tau_E}{\Relmodel}}{\mI(t)}{\interp{\tau_N}{\Relmodel}}$ are two morphisms of relational presheaves such that every component $\mI(t)_{\omega}$ and $\mI(s)_{\omega}$ is a function.
    \end{itemize}
As anticipated in Proposition \ref{remark caso particolare counterpar model GLV}, to present as particular case the notion of counterpart model  in~\cite{CounterpartSemanticsGLV}, where each world is sent to a directed graph, we needed to require that both $\mI(t)_{\omega}$ and $\mI(s)_{\omega}$ are functions for every world $\omega$.
Therefore, given the model $\Relmodel_{Gr}$, we have again that every world $\omega$ is mapped to a directed graph \[d(\omega):=(\interp{\tau_N}{\Relmodel}_{\omega},\interp{\tau_E}{\Relmodel}_{\omega}, \mI(s)_{\omega},\mI(t)_{\omega}) \] identified by the set of nodes $\interp{\tau_N}{\Relmodel}_{\omega}$, the set of arcs $\interp{\tau_E}{\Relmodel}_{\omega}$, and the two functions $\freccia{\interp{\tau_E}{\Relmodel}_{\omega}}{\mI(s)_{\omega},\mI(t)_{\omega}}{\interp{\tau_N}{\Relmodel}_{\omega}}$. Moreover, every morphism $\freccia{\omega}{f}{\sigma}$ of the category $\mW$ induces a \emph{partial homomorphism} of directed graphs \[\frecciaparziale{d(\omega) }{cr_f}{d(\sigma) }\]
given by $cr_f:=((\interp{\tau_N}{\Relmodel}_f)^{\dagger},(\interp{\tau_E}{\Relmodel}_f)^{\dagger})$.
\end{example}

\begin{example}
As an instance of Example~\ref{esempio relational pref. su grafi}, we consider our running example, i.e. the three graphs $\mathbf{G}_0$, $\mathbf{G}_1$ and $\mathbf{G}_2$ presented in Fig. \ref{Figura grafi } of Example \ref{esempio sigma algebra grafi}. In this case we consider a category $\mW$ whose objects are three worlds $\omega_0$, $\omega_1$ and $\omega_2$ and the morphisms are generated by the compositions of
\[
\xymatrix{
\omega_0\ar[r]^{f_0} & \omega_1  \ar@<-.5ex>[r]_{f_2} \ar@<.5ex>[r]^{f_1} & \omega_2\ar[r]^{f_3}& \omega_2.
}
\]
 The relational presheaves of nodes and edges are given by the assignment
 \begin{itemize}
   \setlength\itemsep{0.5em}
     \item $\interp{\tau_N}{\Relmodel}_{\omega_i}:=N_{\mathbf{G}_i}$, where $N_{\mathbf{G}_i}$ is the set of nodes of the graph $\mathbf{G}_i$;
     \item $\interp{\tau_N}{\Relmodel}_{f_0}:=\{(n_3,n_0),(n_4,n_1),(n_3,n_2) \}\subseteq N_{\mathbf{G}_1}\times N_{\mathbf{G}_0}$;
     \item $\interp{\tau_N}{\Relmodel}_{f_1}:=\{(n_5,n_3),(n_5,n_4)\}\subseteq N_{\mathbf{G}_2}\times N_{\mathbf{G}_1}$;
     \item $\interp{\tau_N}{\Relmodel}_{f_2}:=\{(n_5,n_3),(n_5,n_4)\}\subseteq N_{\mathbf{G}_2}\times N_{\mathbf{G}_1}$;
     \item $\interp{\tau_N}{\Relmodel}_{f_3}:=\{(n_5,n_5)\}\subseteq N_{\mathbf{G}_2}\times N_{\mathbf{G}_2}$;
 \end{itemize} and 
\begin{itemize}
  \setlength\itemsep{0.5em}
    \item $\interp{\tau_E}{\Relmodel}_{\omega_i}:=E_{\mathbf{G}_i}$, where $E_{\mathbf{G}_i}$ is the set of edges of the graph $\mathbf{G}_i$.
      \item $\interp{\tau_E}{\Relmodel}_{f_0}:=\{(e_3,e_0),(e_4,e_1) \}\subseteq E_{\mathbf{G}_1}\times E_{\mathbf{G}_0}$;
     \item $\interp{\tau_E}{\Relmodel}_{f_1}:=\{(e_5,e_3)\}\subseteq E_{\mathbf{G}_2}\times E_{\mathbf{G}_1}$;
      \item $\interp{\tau_E}{\Relmodel}_{f_2}:=\{(e_5,e_4)\}\subseteq E_{\mathbf{G}_2}\times E_{\mathbf{G}_1}$;
       \item $\interp{\tau_E}{\Relmodel}_{f_3}:=\{(e_5,e_5)\}\subseteq E_{\mathbf{G}_2}\times E_{\mathbf{G}_2}$.
\end{itemize} 
The natural transformations $\mI(s),\mI(t)$ are the domain and codomain maps. Notice that both $(\interp{\tau_N}{\Relmodel}_{f_i})^{\dagger}$ and $(\interp{\tau_E}{\Relmodel}_{f_i})^{\dagger}$ are partial functions for $i=0\dots 3$. Therefore we are under the hypotheses of Proposition~\ref{remark caso particolare counterpar model GLV}. Thus, following the assignment of Example~\ref{esempio relational pref. su grafi}, we can consider the counterpart model $\angbr{W}{\rightsquigarrow,d}$ corresponding to the counterpart $\mW$-model, and we have that $d(\omega_i)=\mathbf{G}_i$, and $cr_{f_i}$ can be represented graphically as follows

\begin{figure}[H]\label{Figura grafi e morfismi parziali}
\centering
\begin{tikzpicture}[scale=0.40]]
	\begin{pgfonlayer}{nodelayer}
		\node [style=new style 0] (0) at (-11, 6) {$n_0$};
		\node [style=new style 0] (1) at (-5, 6) {$n_1$};
		\node [style=new style 0] (2) at (-8, 2) {$n_2$};
		\node [style=new style 1] (3) at (-11, 4) {$e_2$};
		\node [style=new style 1] (4) at (-5, 4) {$e_1$};
		\node [style=new style 1] (5) at (-8, 8) {$e_0$};
		\node [style=new style 0] (6) at (-1, 5) {$n_3$};
		\node [style=new style 0] (7) at (4.75, 5) {$n_4$};
		\node [style=new style 1] (8) at (2, 8) {$e_3$};
		\node [style=new style 1] (9) at (2, 2) {$e_4$};
		\node [style=new style 0] (10) at (11, 6.25) {$n_5$};
		\node [style=new style 1] (13) at (11, 3.5) {$e_5$};
	\end{pgfonlayer}
	\begin{pgfonlayer}{edgelayer}
		\draw (2) to (3);
		\draw (0) to (5);
		\draw (1) to (4);
		\draw [style=freccia] (4) to (2);
		\draw [style=freccia] (3) to (0);
		\draw [style=freccia] (5) to (1);
		\draw (6) to (8);
		\draw (7) to (9);
		\draw [style=freccia] (9) to (6);
		\draw [style=freccia] (8) to (7);
		\draw [bend right=60, looseness=1.50] (10) to (13);
		\draw [style=freccia, bend right=60, looseness=1.50] (13) to (10);
		\draw [style={freccia_tratteggiata}, bend left=15] (5) to (8);
		\draw [style={freccia_tratteggiata}, bend left=15] (1) to (7);
		\draw [style={freccia_tratteggiata}, bend right=15] (0) to (6);
		\draw [style={freccia_tratteggiata}, bend right=45] (2) to (6);
		\draw [style={freccia_tratteggiata}, bend right, looseness=0.75] (4) to (9);
		\draw [style={freccia_tratteggiata}, in=-120, out=-30, loop] (13) to ();
		\draw [style={freccia_tratteggiata}, in=135, out=45, loop] (10) to ();
		\draw [style=new edge style 0, bend right=45] (9) to (13);
		\draw [style=new edge style 0, bend right=45] (6) to (10);
		\draw [style=new edge style 0, bend right] (7) to (10);
		\draw [style=new edge style 1, bend left] (7) to (10);
		\draw [style=new edge style 1, bend left] (6) to (10);
		\draw [style=new edge style 1, in=-165, out=0, looseness=1.25] (8) to (13);
	\end{pgfonlayer}
\end{tikzpicture}
\caption{A counterpart model with three sequential worlds}
\end{figure}
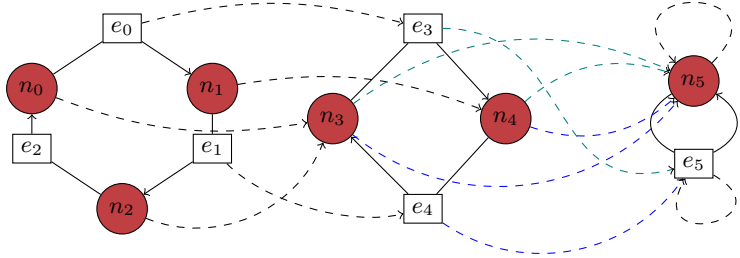
This counterpart model in Figure~\ref{Figura grafi e morfismi parziali} illustrates two  executions of our running example instantiated with three processes (edges). The two evolutions try to instantiate the main features of our logic, as presented in the next sections. The counterpart relations (drawn with dotted lines and colors to distinguish $cr_{f_1}$ from $cr_{f_2}$) reflect
the fact that at each transition one process (edge) is discarded and its source and target channels (nodes)
are merged. In particular, the transition
$cr_{f_0}:=((\interp{\tau_N}{\Relmodel}_{f_0})^{\dagger},(\interp{\tau_E}{\Relmodel}_{f_0})^{\dagger})$ deletes edge $e_2$ and merges nodes $n_0$ and $n_2$ into $n_3$. Similarly for $cr_{f_1}$ and $cr_{f_2}$, while $cr_{f_3}$ is a cycle preserving both $e_5$ and $n_5$ , denoting that the system is idle, yet alive.
\end{example}
\subsection{Relational power-set presheaf}
While presheaves form a topos, and then the category of presheaves is strong enough to deal with higher-order features, relational presheaves on a given category rarely have such a structure, as observed in \cite{Niefield2010}. 
This is due to the fact that the category $\Set$ is a topos while the category of sets and relations $\Rel$ is not a topos, but an \emph{allegory} \cite{freyd1990categories}.
To give an intuition of what an allegory is we recall the paradigm of \cite{freyd1990categories}: \emph{allegories  are to binary relations between sets as categories are to functions between sets}. 

However, even if  the category of sets and relations lacks the topos structure, and in particular \emph{power-objects}, one could employ relational presheaves for higher-order using the structure of \emph{power-allegory} of the category of relations. For the formal definition and the proof that $\Rel$ is a power-allegory we refer the reader to \cite[Prop. 2.414]{freyd1990categories}, while now we briefly discuss how one can define the \emph{power-set relational presheaf} $\relpresheaf{\mC}{\powerelpresheaf{X}}$ of a given presheaf $\relpresheaf{\mC}{X}$.

So, let us consider a relational presheaf $\relpresheaf{\mC}{X}$. While the action of the functor $\relpresheaf{\mC}{\powerelpresheaf{X}}$ we are going to define is clear on the objects, i.e. $\sigma$ is sent to the power-set $\mathrm{P}(X_{\sigma} )$ of $X_{\sigma}$, the problem of lifting a relation $\relazione{X}{R}{Y}$ to a relation on the power-sets $\relazione{\mathrm{P}(A)}{R'}{\mathrm{P}(B)}$ has to be treated carefully.

In fact, one could define for every morphism $\freccia{\sigma}{f}{\omega}$  of $\mC$ the relation $\relazione{\powerelpresheaf{X}_{\omega}}{\powerelpresheaf{X}_f}{\powerelpresheaf{X}_{\sigma}}$ such that for every $A\subseteq X_{\omega}$ and $B\subseteq X_{\sigma}$ we have
    \[\angbr{A}{B}\in\powerelpresheaf{X}_f \iff (\forall a\in A, \; \exists b\in B : \angbr{a}{b}\in X_f)\And (\forall b\in B, \; \exists a\in A : \angbr{a}{b}\in X_f). \]

This is a quite natural way of lifting relations to relations of power-sets, but notice that in this case $\powerelpresheaf{X}$ is just a lax-functor, and not a functor. In fact, we have that $\powerelpresheaf{X}_{gf}\subseteq \powerelpresheaf{X}_f\powerelpresheaf{X}_g$, but these two sets are not equal. 

Since we are interested in considering relational presheaves in the strict sense, we have to lift relations in a different way. To this aim we employ the known equivalence between relations and Galois connections (or maps) on power-sets \cite{GARDINER199421}, i.e. the equivalence between $\Rel$ and $ \mathrm{Map}(\mathbf{Pow}) $.

Recall from \cite[Ex. 2]{GARDINER199421} that given a relation $\relazione{A}{R}{B}$, we can define a function (preserving arbitrary unions) $\freccia{\mathrm{P}(B)}{P_R}{\mathrm{P}(A)}$ by assigning 
$$P_R(S):=\{ a\in A \; | \exists b\in S \; : aRb\}$$
to every subset $S\subseteq B$. Moreover, given the equivalence $\Rel\equiv \mathrm{Map}(\mathbf{Pow}) $, and using the fact that $\mathbf{Pow}=(\mathbf{Pow}^{\op})^{\op}$, we can immediately conclude that the assignment $R\mapsto P_R $ preserves the compositions and the identities.

\begin{definition}
%[Relational power-set]
\label{definizione relational powerset}
Let $\relpresheaf{\mC}{X}$ be a relational presheaf. The \tbf{relational power-set presheaf} $\relpresheaf{\mC}{\powerelpresheaf{X}}$ is the functor defined as
\begin{itemize}
    \item for every object $\sigma\in \mC$, $\powerelpresheaf{X}(\sigma)$ is the power-set of $X_{\sigma}$;
    \item for every $\freccia{\sigma}{f}{\omega}$ we define the relation $\relazione{\powerelpresheaf{X}_{\omega}}{\powerelpresheaf{X}_f}{\powerelpresheaf{X}_{\sigma}}$ as $\powerelpresheaf{X}_f:=P_{X_f}$.
\end{itemize}
\end{definition}

In particular, the relational presheaf $\relpresheaf{\mC}{\powerelpresheaf{X}}$ is an ordinary presheaf $\presheaf{\mC}{\powerelpresheaf{X}}$. Finally, given a relational presheaf $\relpresheaf{\mC}{X}$ we define the \emph{epsiloff relational presheaf} $\relpresheaf{\mC}{\in_{X}}$. 
\begin{definition}
%[Epsiloff relational presheaf]
Let $\relpresheaf{\mC}{X}$ be a relational presheaf. The \tbf{epsiloff relational presheaf} is the functor $\relpresheaf{\mC}{\in_{X}}$ defined as
%given by the assignments
\begin{itemize}
    \item for every $\sigma\in \mC$, $\in_X(\sigma):=\{(a,A)\in X_{\sigma} \times \powerelpresheaf{X}_{\sigma} \; |\; a\in A\}$;
    \item for every $\freccia{\sigma}{f}{\omega}$, $(\in_X)_f$ is the relation given by $\angbr{(b,B)}{(a,A)}\in(\in_X)_f $ if and only if $\angbr{b}{a}\in X_f$ and $\powerelpresheaf{X}_f(B)=A$ where $B\subseteq X_{\omega}$ and $A\subseteq X_{\sigma}$.
\end{itemize}
\end{definition}

It is direct to check that $\in_X$ is a functor, i.e. that it preserves compositions and identities, and also that, while the relational power-set presheaf is a set-valued sheaf, the epsiloff presheaf is just a relational presheaf in general.

\section{Syntax of quantified temporal logic}
Before presenting the syntax of our logic, we consider a set of second-order variables $\chi\in \mathcal{X}$, where a variable $\chi_{\tau}$ with sort $\tau\in S_{\Sigma}$ ranges over sets of elements of sort $\tau$. 

%Given a signature $\Sigma$ we denoted by $R_{\Sigma}$ a set of \emph{typed relation symbols}, and we denote by $ \mathrm{AP}$ the set of \emph{atomic propositions}, and we denote by $a\in \mathrm{AP}$ atomic formulae. 

\begin{definition}
Let $\Sigma$ be a many-sorted signature, $X$ a set of first-order variables and $\mathcal{X}$ a set of second-order variables, both typed over $S_{\Sigma}$. The set $\mF_{\Sigma}$ of formulae of our temporal logic is generated by the rules

\[\phi:= \mathsf{tt} \; | \; \varepsilon\in_{\tau} \chi  \;|\; \neg \phi\; |\; \phi \vee \phi \;| \; \exists_{\tau}x. \phi \;| \; \exists_{\tau}\chi. \phi \; |\; \nextoperator \phi \; |\; \phi_1 \untiloperator \phi_2.  \]
Whenever clear from the context, subscripts and types may be omitted.
\end{definition}
%\marginpar{dire qui o nelle conclusioni che il non monadico avrebbe problemi?}

The \emph{next} operator $\nextoperator$ provides a way of asserting that something has to be true at the next step, i.e. $\nextoperator \phi$ means that $\phi$ has to hold at the next state. 
The \emph{until} operator $\untiloperator$ can be explained as follows: $ \phi_1 \untiloperator \phi_2$ means that $\phi_1$ has to hold at least until $\phi_2$ becomes true, which must hold at the current or a future position. 
The \emph{sometimes} modality $\dm$ is obtained as $\dm \phi:=\mathsf{tt} \untiloperator \phi$ and the \emph{always} modality $\sq$ as $\sq \phi:= \neg \dm \neg \phi$.
%Notice that by the definition of $\dm$, we are fixing a \emph{temporal direction} to the both the operators $\dm$ and $\sq$, i.e. a formula of the kind $\dm \phi$ has to be read as $\phi$ will possibly hold in the \emph{future} and  $\sq \phi$ has to be read as $\phi$ will always hold in the \emph{future}.

 Notice that the standard boolean connectives $\vee,\rightarrow, \leftrightarrow$ and the universal quantifiers can be derived as usual. Moreover, the typed equality $=_{\tau}$ can be derived as $\epsilon_1=_{\tau}\epsilon_2\equiv \forall_{\tau} \chi.( \epsilon_1\in_{\tau} \chi \leftrightarrow \epsilon_2\in_{\tau} \chi)$.
We also remark that $\in_{\tau}$ is a family of membership predicates typed over $S_{\Sigma}$ indicating that the
evaluation of a term with sort $\tau$ belongs to the evaluation of a second-order variable with the same sort
$\tau$.

%A formula $\phi$ such that $\phi$ implies $\sq \phi$ is called \textbf{invariant}. The difference between $\nextoperator$ and $\sq$ is that the next operator, intuitively, is used to express one-step developments, i.e. it means that at the \emph{next step} $\phi$ holds, while $\sq \phi$ means that $\phi$ always holds.

As usual, we consider formulae in context, defined as $[\Gamma, \Delta] \; \phi$, where $\phi$ is a formula of $\mF_{\Sigma}$, $\Gamma$ is a first-order context and $\Delta$ is the second-order context.

We now move to a different syntax that is equivalent to the previous one, yet  describes only those
formulae which are in \emph{positive form}. Such a syntax includes derived operators in order to have
negation applied only to atomic proposition. This is a standard presentation for linear time temporal logics, and it will be pivotal for our semantics, since it allows to capture negation in terms of set complementation.

%\newpage
\begin{definition}%[PNF]
Let $\Sigma$ be a many-sorted signature, $X$ a set of first-order variables and $\mathcal{X}$ a set of second-order variables, both typed over $S_{\Sigma}$. The set $\mF_{\Sigma}$ of positive formulae of our temporal logic is generated by the rules

\[\phi:= \mathsf{tt} \; | \; %\mathsf{ff} \;|\; 
\varepsilon\in_{\tau} \chi  \;|\; \neg \phi
\]
and
\[\psi:= \phi\;|\;\psi \vee \psi \;| \; \psi \wedge \psi \;| \;  \exists_{\tau}x. \phi \;| \; \exists_{\tau}\chi. \psi \; |\;  \;| \;  \forall_{\tau}x. \phi \; |\;\forall_{\tau}\chi. \psi \; |\; \nextoperator \psi \; |\; \psi_1 \untiloperator \psi_2  \; |\; \psi_1 \wuntiloperator \psi_2  \]
where  $\psi_1 \wuntiloperator \psi_2 $ denotes the \emph{weak until operator}.
\end{definition}
%\marginpar{servono gli universali}

Now $ \psi_1 \wuntiloperator \psi_2$ means that $\psi_1$ has to hold at least until $\psi_2$ or, if $\psi_2$ never becomes true, $\psi_1$ must remain true forever. We will often use $\mathsf{ff}$ for $\neg \mathsf{tt}$.
In this setting, the operators $\dm$ and $\sq$ can be presented as $\dm \psi := \mathsf{tt}\untiloperator \psi$ and $\sq\psi := \psi \wuntiloperator \mathsf{ff}$.

On the following we will mostly consider just positive formulae, hence dropping the adjective whenever it is not ambiguous.

\begin{example}
%[Some properties]
\label{example prop. sintat.}
Consider again the graph signature, the counterpart model of Figure \ref{Figura grafi e morfismi parziali}, and the typed predicates $\mathbf{present}_{\tau} (x)\equiv \exists_{\tau} y.x=_{\tau} y$ regarding the presence of an entity with sort $\tau$ in a world (the typing is usually omitted). Combining this with next operator we can speak about elements that are present at the given world and that will be present at the next step for example, i.e.  $\mathbf{present}_{\tau} (x)\wedge \nextoperator ( \mathbf{present}_{\tau} (x))$.

Instead, notice that even if $\neg \mathbf{present}_{\tau} (x)$ is a formula of our logic, it is not in positive form. An equivalent formula in positive form is $\forall_{\tau} y.x\neq_{\tau} y$, where the expression $x\neq_{\tau} y$ is equivalent to 
$\exists_{\tau} \chi.( x\in_{\tau} \chi \wedge y \not \in_{\tau} \chi)$. For the sake of conciseness, we will freely use both $\neg \mathbf{present}_{\tau} (x)$ and $x\neq_{\tau} y$.

Moreover, consider the predicate $\mathbf{loop}(x) \equiv s(x) = t(x)$ characterizing the presence of a cycle. The following formulae
%, where the typing informations are omitted,
express different properties of our running example: $\nextoperator ( \exists_{\tau_E} x.\mathbf{loop}(x))$ states that at the next step there exists a loop, i.e. an edge whose source and target are equal. The formula $\sq (\mathbf{present}_{\tau_E} (x))$ means that for all the evolutions of our systems there exists at least an edge. The formula $\exists_{\tau_N}x. (((x\neq y) \wedge\nextoperator (x=y))$ means that given a node $y$, there exists another different node that at the next step will be identified with $y$ at the next step. Finally, the formula $ \sq (\exists_{\tau_E}e. s(e)=x\wedge t(e)=y) $ means that the nodes $x$ and $y$ will always be connected by an edge. 

\end{example}
\section{Temporal structures and semantics}

\subsection{Temporal structures}
The notion of \emph{hyperdoctrine} was introduced by Lawvere in a series of seminal papers \cite{Lawvere1969b,Lawvere1969} to provide a categorical framework for first order logic. 

In recent years, this notion has been generalized in several settings, for example introducing elementary and existential doctrines~\cite{QCFF,ECRT} and 
modal hyperdoctrines~\cite{Braner2007FirstorderML}.
%UEC,
%
In particular, in \cite{GhilardiMeloni1996,GhilardiMeloni1990} modal hyperdoctrines are introduced employing the notion of \emph{attribute} associated to a given presheaf, or more generally, to a relational presheaf as a categorification of the model semantics. We start recalling the notion of set of classical attributes, and then we show how by simply fixing a class of morphisms of the base category of a presheaf, we can construct the main temporal operators.

\begin{definition}
Let $\relpresheaf{\mW}{X}$ be a relational presheaf. We define the set
\[ \att(X):=\{ \{A_{\omega}\}_{\omega\in \mW} \; |\; A_{\omega}\subseteq X_{\omega}\}\] 
whose objects are called \tbf{classical attributes}.
\end{definition}
The set of classical attributes has a natural structure of complete boolean algebra when we consider the order provided by the inclusion.
Moreover, a morphism $\freccia{X}{f}{D}$ of relational presheaves induces a morphism of boolean algebras  $$\freccia{\att(D)}{f^*}{\att(X)}$$
between the sets of classical attributes. In particular, this action is given by pulling back world-by-world, i.e. for $A\in \att (D)$, computing the pullback for every world $\omega$ as below and defining $f^*(A):=\{f_{\omega}^*(A_{\omega})\}_{\omega\in  \mW}\in \att (X)$ for each $A\in \att( D)$.

\[\xymatrix{
f_{\omega}^*(A_{\omega})\ar@{^{(}->}[r]\ar[d] & X_{\omega}\ar[d]^{f_\omega}\\
A_{\omega}\ar@{^{(}->}[r]& D_{\omega}
}\]

Given a relational presheaf $\relpresheaf{\mW}{X}$ and a classical attribute $A=\{A_{\omega}\}_{\omega\in \mW}$ of $ \att(X)$ and an element $s\in X_{\omega}$, we use the validity notation 
$s\vDash^X_{\omega}A$
to mean that $s\in A_{\omega}$. Therefore, writing $s\vDash^X_{\omega}A$ means that at the world or instant $\omega$, an individual $s$ of $X_{\omega}$ satisfies the property $A$.
 
\begin{definition}
We say that a relational presheaf  $\relpresheaf{\mW}{X}$ is \tbf{equipped with a temporal structure} $\mathrm{T}$, if the base category $\mW$ is considered together with a class of morphisms $\mathrm{T}$.
\end{definition}
Given a relational presheaf $\relpresheaf{\mW}{X}$, the idea is that the class $\mathrm{T}$ represents the \emph{atomic processes} or the \emph{indecomposable operations} of $\mW$. 
\begin{definition}
Given a temporal structure $\mathrm{T}$ on $\relpresheaf{\mW}{X}$ we denote by $\cammini{\mathrm{T}}{\omega}$ the class of sequences $\mathrm{t}:=(t_1,t_2,t_3,\dots)$ of arrows such that $t_n\in T$ and such that $\domain{t_1}=\omega$ and $\codomain{t_i}=\domain{t_{i+1}}$ for $i\geq 1$.
\end{definition}

 Given a sequence $\mathrm{t}:=(t_1,t_2,t_3,\dots)$ we denote by $\mathrm{t}_{\leq i}$ the arrow $t_it_{i-1}\cdots t_1$. Moreover we denote $\omega_i:=\codomain{t_i}$.
 The intuition is that the class of arrows $\cammini{\mathrm{T}}{\omega}$ represents the $\mathrm{T}$\emph{-evolutions of} the state $\omega$.
The choice of the name \emph{temporal structure} is due to the fact that by simply fixing a class $\mathrm{T}$, we can define operators $\nextoperator$, $\untiloperator$, $\sq$ and $\dm$ on the complete boolean algebra $\att(X)$ of classical attributes. 
\iffalse
\textbf{Notation:} we denote by $\mathrm{I}_c$ the closure under composition of the arrows of $\mathrm{I}$.
%\begin{definition}
%Let $\mC$ be a small category, and let us consider the \tbf{temporal doctrine of semantic attributes} is a pair  %$(\temporaldoctrine{\relpresheafcat{\mW}}{\att}, \mathrm{I})$ where
%\begin{itemize}
%\item for every relational presheaf $\relpresheaf{\mW}{X}$, we define the set
%\[ \att(X):=\{ \{A_{\omega}\}_{\omega\in \mW} \; |\; A_{\omega}\subseteq X_{\omega}\}\] 
%whose objects are called \tbf{classical attributes};
%\item for every natural transformation $\freccia{X}{\psi}{V}$ of presheaves, then $$\freccia{\att (E)}{\att_{\psi}}{\att (D)}$$ acts as %$\{B_\omega\}_{\omega\in \mC}\mapsto \{\psi^{-1}_{\omega}(B_{\omega})\}_{\omega\in \mC}$, where $B_{\omega}\subseteq V_{\omega}$ for %every $\omega$.
%\item $\mathrm{I}$ is a class of morphisms of $\mW$, that represents \emph{atomic processes}.
%\end{itemize}
%\end{definition}
%This is a hyperdoctrine when we consider the order defined by the inclusions of subsets, see \cite{HandbookModalLogic2007}. 
\noindent
For every element $A:=\{A_{\omega}\}_{\omega\in \mW} $ of $\att (X)$ we define

\begin{itemize}
    \item $s \vDash^X_{\omega} \nextoperator (A) $ if and only if for every arrow $\freccia{\omega}{f}{\sigma}$ arrow of $\mathrm{I}$, there exists an element $t\in X_{\sigma} $ such that $\angbr{t}{s}\in X_f$ and  $t \vDash^X_{\sigma} A $; 
    \item $s\vDash^X_{\omega} A\untiloperator B$ if and only if there exists $C_{\omega}\subseteq X_{\omega}$ such that $s\in C_{\omega}$ and $C_{\omega}=B_{\omega} \vee (A_{\omega} \wedge \nextoperator (C)_{\omega})$;
    \item $s\vDash^X_{\omega}\sq A$ if and only if for every object $\sigma\in \mC$ and every arrow $\freccia{\omega}{f}{\sigma}$ of $\mathrm{I}_c$, if $\angbr{e}{s}\in X_f$ then $e\vDash^X_{\sigma}A$;
    \item $s\vDash_{\omega}^X \dm A$ if and only if there exists an object $\sigma$, an arrow $\freccia{\omega}{f}{\sigma}$ of $\mathrm{I}_c$, and an element $t\in X_{\sigma}$ such that $\angbr{t}{s}\in X_f$ and $t\vDash^X_{\sigma} A.$
\end{itemize}
---------------------------------------------------------------------------------------------------

\textbf{Proposta 2 :}
\noindent
\fi
\begin{definition}\label{definizione temporal operators prefasci relazionali}
Let us consider a temporal structure $\mathrm{T}$ on $\relpresheaf{\mW}{X}$. For every element $A:=\{A_{\omega}\}_{\omega\in \mW} $ of $\att (X)$  and $s\in X_{\omega} $ we define

\begin{itemize}
   \setlength\itemsep{0.5em}
    \item $s \vDash^X_{\omega} \nextoperator (A) $ if and only if for every arrow $\freccia{\omega}{t}{\sigma}$ of $\mathrm{T}$, there exists an element $z\in D_{\sigma} $ such that $\angbr{z}{s}\in X_t$ and  $z \vDash^X_{\sigma} A $; 
    \item $s\vDash^X_{\omega} A\untiloperator B$ if for every $\mathrm{t}\in \cammini{T}{\omega}$ there exists an $\bar{n}$ such that for every $i\leq \bar{n}$ there exists $z_i\in X_{\omega_i} $ such that $\angbr{z_i}{s}\in X_{t_{\leq i}}$ and $z_i \vDash^X_{\omega_i} A $ and an element $z_{\bar{n}}\in X_{\omega_{\bar{n}}} $ such that $\angbr{z_{\bar{n}}}{s}\in X_{t_{\leq \bar{n}}}$ and $z_{\bar{n}} \vDash^X_{\omega_{\bar{n}}} B $.
   \item $s\vDash^X_{\omega} A\wuntiloperator B$ if for every $\mathrm{t}\in \cammini{T}{\omega}$, we have that for every $i$ there exists $z_i\in X_{\omega_i} $ such that $\angbr{z_i}{s}\in X_{t_{\leq i}}$ and $z_i \vDash^X_{\omega_i} A $ or there exists an $\bar{n}$ such that for every $i\leq \bar{n}$ there exists $z_i\in X_{\omega_i} $ such that $\angbr{z_i}{s}\in X_{t_{\leq i}}$ and $z_i \vDash^X_{\omega_i} A $ and an element $z_{\bar{n}}\in X_{\omega_{\bar{n}}} $ such that $\angbr{z_{\bar{n}}}{s}\in X_{t_{\leq \bar{n}}}$ and $z_{\bar{n}} \vDash^X_{\omega_{\bar{n}}} B $.
   \end{itemize}
   \end{definition}
Recall that given a relational presaheaf $\relpresheaf{\mW}{X}$, the top element of the boolean algebra $\att (X)$ is given by the attributes $\top=\{ X_{\omega}\}_{\omega\in \mW}$ and the bottom element by $\bot=\{ \emptyset_{\omega}\}_{\omega\in \mW}$ since the order of $\att (X)$ is given by the inclusion of sets.
Therefore, employing the previous notions, we can define the operators $\dm A$ and $\sq A$ as $\dm A := \top\untiloperator A$ and $\sq A := A \wuntiloperator \bot$.
\begin{remark}
Notice also that by definition, we have that  $s\vDash^X_{\omega} A\untiloperator B$ iff $s\in C_{\omega}$ for every   $C_{\omega}\subseteq X_{\omega}$ such that $C_{\omega}=B_{\omega} \vee (A_{\omega} \wedge \nextoperator (C)_{\omega})$
(and similarly for $s\vDash^X_{\omega} A\wuntiloperator B$). In other words, the set $(A\untiloperator B)_{\omega}$ can be more concisely described as the least fixed point, with respect to the order given by set inclusion,  of the function $B_{\omega} \vee (A_{\omega} \wedge \nextoperator (-)_{\omega})$ 
%assigning to the set $E$ the set $B_{\omega} \vee (A_{\omega} \wedge \nextoperator (E)_{\omega})$,
and analogously, $(A\wuntiloperator B)_{\omega}$ as the greatest fixed point of the same function
(recall that the boolean structures of attributes is given by the set theoretic inclusions, hence $\vee$ and $\wedge$ are given by the union and the intersection of sets, respectively).

\end{remark}
\begin{example}
Let us consider a temporal structure $\mathrm{T}$ on $\relpresheaf{\mW}{X}$. Given an attribute $A\in \att (X)$ and $s\in X_{\omega}$ it is direct to check that $\dm A$ and $\sq A $ can be directly described as follows
   \begin{itemize}
      \setlength\itemsep{0.5em}
    \item $s\vDash^X_{\omega}\sq A$  if for every $\mathrm{t}\in \cammini{T}{\omega}$ and for every $i$ there exists an element $z_i\in X_{\omega_i} $ such that $\angbr{z_i}{s}\in X_{t_{\leq i}}$ and $z_i \vDash^X_{\omega_i} A $;
    \item $s\vDash_{\omega}^X \dm A$ if for every $\mathrm{t}\in \cammini{T}{\omega}$ there exist $\bar{n}$ and an element  $z_{\bar{n}}\in X_{\omega_{\bar{n}}} $ such that $\angbr{z_{\bar{n}}}{s}\in X_{t_{\leq {\bar{n}}}}$ and  $z_{\bar{n}} \vDash^X_{\omega_{\bar{n}}} A $.
\end{itemize}
\end{example}
%-------------------------------------------------------------------------------------------------
\begin{remark}

Notice that in the definition of classical attributes, the relational structure of a given relational presheaves has no rule, while it is fundamental in defining the temporal operators embodying the fundamental idea of the counterpart semantics. 

However, depending of what kind of logic we are interested in, different conditions can be imposed or required on the notion of attribute. For example, one can impose the so called \emph{stability condition with respect to the past}, i.e. for every relational presheaf $\relpresheaf{\mC}{X}$, we define 
\[ \relatt(X):=\{ \{A_{\omega}\}_{\omega\in \mC} \; |\; A_{\omega}\subseteq X_{\omega}\}\] 
where every family $\{A_{\omega}\}_{\omega\in \mC}$ satisfies the following stability condition 
\[ \freccia{\sigma}{f}{\tau}, \; \angbr{t}{s}\in X_f \mbox{ and } t\in A_{\tau} \mbox{ implies } s\in A_{\sigma}.\]

This kind of attributes are introduced in \cite{GhilardiMeloni1996} to apply the relational semantics in intuitionistic logic. However, considering this notion of attributes requires a careful analysis of the interpretation of quantifiers because, for example, the usual Beck Chevalley condition in general is not satisfied in this semantics. For details we refer to~\cite{GhilardiMeloni1996}.
\end{remark}

\begin{example}
%[Flows of time]
\label{example LTL}
A \emph{flow of time} is a pair $(T,<)$ where $T$ is a non-empty set whose elements are called \emph{time points} and $<$ is an irreﬂexive and transitive binary relation on $T$. Given two time points $s,u\in T$, $s<u$ intuitively means that $s$ is \emph{earlier} of $u$. %or that $u$ is \emph{later than} $s$. 
Requiring that $<$ is irreﬂexive is thus motivated by the fact that no time point should be in the past or future of itself, and $<$ has to be transitive because $t$ is earlier than $u$, and $u$ earlier than $v$, then we expect $t$ to be earlier than $v$.
A flow of time is \emph{linear} if given any two distinct time points in it, one is before the other, i.e. for every $s,t\in T$ such that $s\neq t$ we have that either $s<t$ or $t<s$. 
A natural choice of base and temporal structure is thus a linear flow of time $(T,\leq)$ and the free category $\mW(T)$ on it. Then every relational presheaf $\relpresheaf{\mW(T)}{X}$  admits a natural temporal structure $\mathrm{T}$ provided by the set of arrows of relation $<$. We will observe in the following section that this choice allows us to obtain the standard semantics for LTL as particular case of our constructions and notions.  In particular, the interpretation of the temporal operators $\nextoperator,\dm,\sq, \untiloperator,\wuntiloperator$ coincide precisely with the usual interpretation of LTL.
\end{example}
\begin{example}
%[Labelled transition systems]
The notions of Kripke frame and of flow of time are an intuitive abstraction of systems that evolve or change over instants of time. However, notice that both of them suffer of the limitation that the relation  between worlds or instants of time has to be binary. This represents a serious constraint if, for example, one want to consider a simple situation in which a system could evolve in another one through two distinct developments. Moreover, the requirement that the relation $<$ in the definition of flow of time has to be transitive and irreflexive does not allow to consider reversible process or reversible developments. This is quite natural if one has to work with time developments, but if we are interesting in a more general situation, we have to consider different notions, such as e.g. labelled transition systems. Recall that a \emph{labelled transition system} (LTS) is a triple $(S,L,\rightarrow)$, where $S$ is a non-empty set of \emph{states}, $L$ is the set of \emph{labels}, and $\rightarrow\subset S\times L\times S$ is a relation, which is total in the first component, i.e. for every $s\in S$ there exists a label $l\in L$ and a state $s'$ such that $(s,l,s')\in \rightarrow$. Therefore, another meaningful choice for the category of worlds and the temporal structure is considering an LTS $(S,L,\rightarrow)$ and the free category $\mW(S)$ on it. Then every relational presheaf $\relpresheaf{\mW(S)}{X}$ admits a temporal structure give by the set $\mathrm{T}_S$ of arrows of the relation $\rightarrow$, where the next time operator $\nextoperator$ has exactly the intuitive meaning of \emph{it holds at every next-step of length one}.
\end{example}
\subsection{Semantics via temporal structures}
In this section we show how relational presheaves and temporal structures can be employed to obtain models for our quantified temporal logic. 
Recall that in the semantics of worlds, providing the interpretation of a formula means providing an interpretation of such a formula in \emph{every world}. 
%This because a given statement can be true in a given world, but it could not hold in another world. 

%For these reason, we use the notion of attributes to provide a Kripke-style interpretation of first-order temporal logic. Notice that, at the same time, our semantics follows the usual categorical approach employed for the usual first-order logic.
\begin{definition}
Let $\Sigma$ be a many-sorted signature. A \tbf{temporal counterpart $\mW$-model} is defined by $\Tempmodel :=(\mW,\mathrm{T},\sortmodel,\functionmodel)$

\begin{itemize}
   \setlength\itemsep{0.5em}
    \item the triple $(\mW,\sortmodel,\functionmodel)$ is a counterpart $\mW$-model;
    \item every relational presheaf $\relpresheaf{\mW}{X}$ of $\sortmodel$ is equipped with the temporal structure $\mathrm{T}$.
    %\item $\relationnmodel =\{ \mI (R) \in \att (\interp{\tau_1}{\Relmodel}\times \cdots \interp{\tau_{m}}{{\Relmodel}} )\;|\; R\in R_{\Sigma} \} $ is a set of classical attributes.
\end{itemize}
\end{definition}
The presence of a temporal structure 
%in the definition of temporal counterpart model 
allows us to refine Proposition~\ref{remark caso particolare counterpar model GLV} and Proposition~\ref{proposition ogni counterpar model ci da un counterpart W model} in the context of temporal $\mW$-counterpart models, and to obtain the following correspondence with counterpart models in the sense of \cite{CounterpartSemanticsGLV}.
\begin{theorem}\label{teorema corrispondenza 1-1}
There is a bijective correspondence between counterpart models  $\angbr{W}{\rightsquigarrow,d}$  and temporal $\mW$-counterpart models $(\mW,\mathrm{T},\sortmodel,\functionmodel)$ where $\mW$ is freely generated by the set of objects $\mathbf{ob}(\mW)$ and the class of arrows $\mathrm{T}$ and every relational presheaf of $\sortmodel$ sends a morphism  $\freccia{\omega}{f}{\sigma}$ of  $\mW$ to a relation $\relazione{\interp{\tau}{\Relmodel}_{\sigma}}{\interp{\tau}{\Relmodel}_f}{\interp{\tau}{\Relmodel}_{\omega}}$ whose converse $\relazione{\interp{\tau}{\Relmodel}_{\omega}}{(\interp{\tau}{\Relmodel}_f)^{\dagger}}{\interp{\tau}{\Relmodel}_{\sigma}}$ is a partial function.
\end{theorem}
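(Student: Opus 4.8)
The plan is to exhibit the correspondence explicitly in both directions, using the constructions already set up in Proposition~\ref{remark caso particolare counterpar model GLV} and Proposition~\ref{proposition ogni counterpar model ci da un counterpart W model}, and then to check that, under the temporal-structure hypotheses, the two assignments are mutually inverse. First I would fix a counterpart model $\model=\angbr{W}{\rightsquigarrow,d}$ and apply Proposition~\ref{proposition ogni counterpar model ci da un counterpart W model} to obtain a counterpart $\mW$-model $\Relmodel_{\model}=(\mW,\sortmodel,\functionmodel)$; I then equip it with the temporal structure $\mathrm{T}$ given by the set of \emph{generating} arrows $cr:\omega_1\to\omega_2$, one for each $(\omega_1,cr,\omega_2)\in\rightsquigarrow$. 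By construction $\mW$ is freely generated by $\mathbf{ob}(\mW)=W$ and by $\mathrm{T}$, and Proposition~\ref{proposition ogni counterpar model ci da un counterpart W model} already guarantees that each relational presheaf of $\sortmodel$ sends every morphism of $\mW$ to a relation whose converse is a partial function (this is preserved under composition, since partial functions compose to partial functions and the $(-)^\dagger$ operation turns composition of relations into composition in the reverse order). This gives one direction of the assignment.

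Conversely, I would start from a temporal $\mW$-counterpart model $\Tempmodel=(\mW,\mathrm{T},\sortmodel,\functionmodel)$ satisfying the stated hypotheses, and apply Proposition~\ref{remark caso particolare counterpar model GLV} to extract a counterpart model $\model_{\Relmodel}=\angbr{W_{\Relmodel}}{\rightsquigarrow_{\Relmodel},d_{\Relmodel}}$ --- but with a crucial refinement: instead of declaring an element of $\rightsquigarrow_{\Relmodel}$ for \emph{every} arrow $f$ of $\mW$, I only declare one for every arrow $t\in\mathrm{T}$, setting $(\omega,cr_t,\sigma)\in\rightsquigarrow_{\Relmodel}$ with $cr_t:=(\interp{\tau}{\Relmodel}_t)^{\dagger}$ as in that proposition. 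This is the point where the earlier Remark about $\model_{(\Relmodel_{\model})}$ differing from $\model$ is resolved: taking the full set of arrows would overpopulate $\rightsquigarrow$ with all composites, whereas restricting to $\mathrm{T}$ keeps exactly the generating transitions, which is what a counterpart model records. I then need to verify that this restricted $\rightsquigarrow_{\Relmodel}$ still yields a well-defined counterpart model, i.e.\ that each $cr_t$ is a genuine partial $\Sigma$-homomorphism $d_{\Relmodel}(\omega)\rightharpoonup d_{\Relmodel}(\sigma)$; this follows because $(\interp{\tau}{\Relmodel}_t)^{\dagger}$ is a partial function by hypothesis and compatibility with the operation symbols is inherited from the fact that $\mathcal{I}(f_\Sigma)$ is a morphism of relational presheaves (the argument is the dual of the naturality computation already used for $\mW$-models).

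Finally I would check the two round-trips. Going $\model\mapsto\Relmodel_{\model}\mapsto\model_{(\Relmodel_{\model})}$: the worlds are $\mathbf{ob}(\mW)=W$ on the nose; the algebra $d_{\Relmodel_{\model}}(\omega)$ reassembles $d(\omega)$ sort-by-sort and operation-by-operation by the very definition in Proposition~\ref{proposition ogni counterpar model ci da un counterpart W model}; and since we now only generate $\rightsquigarrow$-elements from the \emph{generating} arrows $\mathrm{T}$, and for a generating arrow $cr$ we have $(\interp{\tau}{\Relmodel_{\model}}_{cr})^{\dagger}=((cr^{\dagger})^{\dagger})=cr$, we recover exactly the original triples $(\omega_1,cr,\omega_2)\in\rightsquigarrow$. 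The other round-trip $\Tempmodel\mapsto\model_{\Relmodel}\mapsto\Relmodel_{(\model_{\Relmodel})}$ is symmetric: the free category on $(\mathbf{ob}(\mW),\mathrm{T})$ is $\mW$ itself by the hypothesis that $\mW$ is so freely generated, the relational presheaves are reconstructed because a relational presheaf on a freely generated category is determined by its values on objects and on generating arrows, and the operation morphisms agree because they are determined world-by-world. The main obstacle I anticipate is bookkeeping rather than conceptual: one must be careful that the double-dagger $((-)^{\dagger})^{\dagger}$ is the identity only on \emph{partial functions viewed as relations} (which is exactly the standing hypothesis), and that functoriality on the freely generated side is not over-constrained --- i.e.\ one must confirm that \emph{any} assignment of partial-function-converses to generators extends uniquely to a relational presheaf, which is immediate since $\Rel$-valued functors out of a free category are freely determined by their action on generators, with no coherence conditions beyond those already imposed.
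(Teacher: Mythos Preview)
Your proposal is correct and follows essentially the same approach as the paper's own proof: use Proposition~\ref{proposition ogni counterpar model ci da un counterpart W model} for one direction (equipping the result with $\mathrm{T}$ given by the generating arrows), and for the other direction modify Proposition~\ref{remark caso particolare counterpar model GLV} so that $\rightsquigarrow$ is populated only from the arrows in $\mathrm{T}$ rather than from all arrows of $\mW$. The paper's proof is terser --- it omits the explicit round-trip verification and the bookkeeping remarks about $((-)^\dagger)^\dagger$ and free determination of $\Rel$-valued functors on generators --- but your additional detail is sound and faithful to the intended argument.
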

\begin{proof}
The construction of a temporal $\mW$-counterpart model from a counterpart model is given exactly as in Proposition \ref{proposition ogni counterpar model ci da un counterpart W model} with the obvious choice of the temporal structure. What changes is only the definition of a counterpart model from a temporal $\mW$-counterpart model. In fact, while in Proposition \ref{remark caso particolare counterpar model GLV} every arrow $\freccia{\omega_1}{f}{\omega_2}$ of the base category induces an element $(\omega_1,cr_f,\omega_2)\in\rightsquigarrow $, if we start from a temporal $\mW$-counterpart model as in Theorem \ref{teorema corrispondenza 1-1}, we define an elements $(\omega_1,cr_f,\omega_2)\in\rightsquigarrow $ only for those arrows $f\in \mathrm{T}$ of the temporal structure. It is direct to check that these two constructions provide a bijective correspondence.
\end{proof}

Now we show how terms-in-context and formulae-in-context are interpreted in our model. We start by noticing that, by definition of temporal $\mW$-counterpart model, the interpretation of sorts, function and relation symbols is already fixed.
Therefore, for the rest of this section we fix a temporal $\mW$-counterpart model  $\Tempmodel =(\mW,\mathrm{T},\sortmodel,\functionmodel)$. 

First of all, given a first-order context $\Gamma= \context{x}{\tau}{n}$,  we denote by 
$$ \interp{\Gamma}{\Relmodel}:=\interp{\tau_1}{\Relmodel}\times \dots\times \interp{\tau_n}{\Relmodel}$$
the relational presheaf associated to the context $\Gamma$ via the counterpart $\mW$-model $(\mW,\sortmodel,\functionmodel)$. Then, given a second-order context $\Delta=\context{\chi}{\tau}{m}$, we denote by
$$ \interp{\Delta}{\Relmodel}:=\powerelpresheaf{\interp{\tau_1}{\Relmodel}}\times \dots \times\powerelpresheaf{\interp{\tau_m}{\Relmodel}}$$
where $\powerelpresheaf{\interp{\tau_i}{\Relmodel}}$ denotes the relational power-set presheaf of $\interp{\tau_i}{\Relmodel}$, see Definition \ref{definizione relational powerset}. Therefore we define the interpretation
$$\interp{\Gamma,\Delta}{\Relmodel}=\interp{\Gamma}{\Relmodel}\times \interp{\Delta}{\Relmodel}.$$

Recall that a term in a given context $[\Gamma,\Delta]\; t:\tau $ is then interpreted as a morphism of relational presheaves, defined as
\begin{itemize}
   \item if $t=x_i$ then $\interp{t}{\Relmodel}$ is the projection $\freccia{\interp{\Gamma,\Delta}{\Relmodel}}{\pi_i}{\interp{\tau}{\Relmodel}}$;
    \item if $t=f(t_1,\dots,t_k)$, then $\interp{t}{\Relmodel}$ is given by the composition 
    \[
    \xymatrix@+2pc{
  \interp{\Gamma,\Delta}{\Relmodel} \ar[rr]^{\angbr{\interp{t_1}{\Relmodel},\dots}{\interp{t_k}{\Relmodel}}}& &\interp{\Gamma' }{\Relmodel} \ar[r]^{\mI(f)}& \interp{\tau}{\Relmodel}.
    } \]
\end{itemize}
The interpretation of a given formula in context $[\Gamma] \; \phi $ has to be defined for each world, in line with the usual Kripke-style semantics. Therefore, the interpretation of $[\Gamma] \; \phi $ is  defined as a classical attribute
\[\interp{[\Gamma] \; \phi}{\Tempmodel}=\{\interp{[\Gamma] \; \phi}{\Tempmodel}_{\omega}\}_{\omega\in \mW}\]
of the relational presheaf $\interp{\Gamma}{\Tempmodel}$, where every $\interp{[\Gamma] \; \phi}{\Tempmodel}_{\omega}$ is a subset of $\interp{\Gamma}{\Tempmodel}_{\omega}$. Recall that the notation $\interp{\Gamma}{\Tempmodel}_{\omega}$ indicates the set given by the evaluation $\interp{\Gamma}{\Tempmodel}$ at $\omega$.

Moreover, we start defining the interpretation of a formula at a given fixed world, and we use induction on the structure on $\phi$ as usual. 

The interpretation of standard formulae at a given world $\omega$  is given as follows
\begin{itemize}
   \setlength\itemsep{0.5em}

%\item for every atomic formula $a:=R(t_1,\dots, t_k)$, then $\interp{\Gamma,\Delta] \; a}{\Relmodel}_{\omega}:= \angbr{\interp{t_1}{\Relmodel},\dots}{\interp{t_k}{\Relmodel}}^*_{\omega}(\mI (R)_{\omega})$;
\item $\interp{[\Gamma,\Delta] \;\top }{\Relmodel}_{\omega}:=\interp{\Gamma,\Delta}{\Relmodel}_{\omega}$;
\item $\interp{[\Gamma,\Delta] \;\bot }{\Relmodel}_{\omega}:=\emptyset$;

\item $\interp{[\Gamma,\Delta] \;\varepsilon\in_{\tau} \chi }{\Relmodel}_{\omega}:= \angbr{\pi_{\varepsilon}}{\pi_{\chi}}_{\omega}^* (\in_{\interp{\tau}{\Relmodel}}(\omega))$ where $\freccia{\interp{[\Gamma,\Delta]}{\Relmodel}}{\angbr{\pi_{\varepsilon}}{\pi_{\chi}}}{\interp{[\varepsilon:\tau, \chi:\tau}{\Relmodel}}$ are the opportune projections;
\item for every $\phi$, then $\interp{\Gamma,\Delta] \; \neg \phi}{\Relmodel}_{\omega}:= \overline{\interp{\Gamma,\Delta] \; \phi}{\Relmodel}_{\omega} }$ where $\overline{(-)}$ denotes the set-theoretical complements, i.e. $\interp{\Gamma,\Delta] \; \neg \phi}{\Relmodel}_{\omega} := \interp{\Gamma,\Delta}{\Relmodel}_{\omega} \setminus \interp{\Gamma,\Delta] \; \phi}{\Relmodel}_{\omega}$.
    \end{itemize}
We also have that   
    \begin{itemize}
       \setlength\itemsep{0.5em}

       \item $\interp{[\Gamma,\Delta]\; \psi \vee \psi}{\Relmodel}_{\omega}:=\interp{[\Gamma,\Delta]\; \psi}{\Tempmodel}_{\omega}\cup \interp{[\Gamma,\Delta]\;  \psi }{\Relmodel}_{\omega}$;
     \item $\interp{[\Gamma,\Delta]\; \psi \wedge \psi}{\Relmodel}_{\omega}:=\interp{[\Gamma,\Delta]\; \psi}{\Tempmodel}_{\omega}\cap \interp{[\Gamma,\Delta]\;  \psi }{\Relmodel}_{\omega}$;
     \item $\interp{[\Gamma,\Delta]\; \forall_{\tau} y. \psi}{\Relmodel}_{\omega}:=\{ a\in \interp{\Gamma,\Delta}{\Relmodel}(\omega)\;| \forall b\in \interp{\tau}{\Relmodel}_{\omega} \mbox{ we have } (a,b)\in \interp{[\Gamma,y:\tau,\Delta]\;\psi}{\Tempmodel}_{\omega}\}$;
 \item $\interp{[\Gamma,\Delta]\; \forall_{\tau} \chi. \psi}{\Relmodel}_{\omega}:=\{ a\in \interp{\Gamma,\Delta}{\Relmodel}(\omega)\;| \forall b\in \powerelpresheaf{\interp{\tau}{\Relmodel}}(\omega) \mbox{ we have } (a,b)\in \interp{[\Gamma,\Delta,\chi: \tau]\;\psi}{\Tempmodel}_{\omega}\}$;
    \item $\interp{[\Gamma,\Delta]\; \exists_{\tau} y. \psi}{\Relmodel}_{\omega}:=\{ a\in \interp{\Gamma,\Delta}{\Relmodel}(\omega)\;| \exists b\in \interp{\tau}{\Relmodel}_{\omega} \mbox{ such that } (a,b)\in \interp{[\Gamma,y:\tau,\Delta]\;\psi}{\Tempmodel}_{\omega}\}$;
 \item $\interp{[\Gamma,\Delta]\; \exists_{\tau} \chi. \psi}{\Relmodel}_{\omega}:=\{ a\in \interp{\Gamma,\Delta}{\Relmodel}(\omega)\;| \exists b\in \powerelpresheaf{\interp{\tau}{\Relmodel}}(\omega) \mbox{ such that } (a,b)\in \interp{[\Gamma,\Delta,\chi: \tau]\;\psi}{\Tempmodel}_{\omega}\}$.
\end{itemize}
Finally, we have the interpretation of formulae in which temporal operators occur
\begin{itemize}
   \setlength\itemsep{0.5em}

    \item $\interp{[\Gamma,\Delta] \;\nextoperator \psi }{\Relmodel}_{\omega}:= \nextoperator\interp{[\Gamma,\Delta] \; \psi }{\Relmodel}_{\omega}$;
      \item $\interp{[\Gamma,\Delta] \;\psi \untiloperator \psi }{\Relmodel}_{\omega}:= \interp{[\Gamma,\Delta] \; \psi }{\Relmodel}_{\omega}\untiloperator\interp{[\Gamma,\Delta] \; \psi }{\Relmodel}_{\omega} $;
      \item $\interp{[\Gamma,\Delta] \;\psi \wuntiloperator \psi }{\Relmodel}_{\omega}:= \interp{[\Gamma,\Delta] \; \psi }{\Relmodel}_{\omega}\wuntiloperator\interp{[\Gamma,\Delta] \; \psi }{\Relmodel}_{\omega} $;
    \item $\interp{[\Gamma,\Delta] \;\sq \psi }{\Relmodel}_{\omega}:= \sq\interp{[\Gamma,\Delta] \; \psi }{\Relmodel}_{\omega}$;
        \item $\interp{[\Gamma,\Delta] \;\dm \psi }{\Relmodel}_{\omega}:= \dm\interp{[\Gamma,\Delta] \; \psi }{\Relmodel}_{\omega}$;

\end{itemize}
where $\nextoperator,\untiloperator,\wuntiloperator,\dm,\sq $ are the temporal operators induced by the temporal structure $\mathrm{T}$.

We conclude by comparing our semantics with that introduced for counterpart models, and we refer to \cite{CounterpartSemanticsGLV} for all the details about that. In particular we can employ the correspondence of Theorem \ref{teorema corrispondenza 1-1} to conclude that the semantics introduced in \cite{CounterpartSemanticsGLV} and the one we present for temporal counterpart models are equivalent.
\begin{theorem}
A formula is satisfied by a counterpart model if and only if it is satisfied by the corresponding $\mW$-temporal counterpart model.
\end{theorem}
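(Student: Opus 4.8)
The plan is to reduce everything to the bijective correspondence of Theorem~\ref{teorema corrispondenza 1-1}. Fix a counterpart model $\model=\angbr{W}{\rightsquigarrow,d}$ and let $\Tempmodel=(\mW,\mathrm{T},\sortmodel,\functionmodel)$ be the associated temporal counterpart $\mW$-model, so that $\mathbf{ob}(\mW)=W$, the class $\mathrm{T}$ consists of exactly one generating arrow $cr\colon\omega_1\to\omega_2$ for each $(\omega_1,cr,\omega_2)\in\rightsquigarrow$, and for every sort $\tau$ one has $\interp{\tau}{\Tempmodel}_\omega=d(\omega)_\tau$ with $\interp{\tau}{\Tempmodel}_{cr}=(cr)^{\dagger}_\tau$. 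What must be shown is that for every formula-in-context $[\Gamma,\Delta]\,\phi$, every world $\omega$, and every valuation, the interpretation $\interp{[\Gamma,\Delta]\,\phi}{\Tempmodel}_\omega$ coincides with the set of valuations that satisfy $\phi$ at $\omega$ in $\model$ according to the semantics of~\cite{CounterpartSemanticsGLV}. This is proved by a single structural induction on $\phi$; the equivalence in the statement is then the special case of closed formulae.

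The base cases are immediate because both semantics are built on literally the same data: the carrier $\interp{\tau}{\Tempmodel}_\omega$ is $d(\omega)_\tau$, function symbols are interpreted as $f_\Sigma^{d(\omega)}$, and by Definition~\ref{definizione relational powerset} the power-set presheaf yields at $\omega$ the honest power-set $\mathrm{P}(d(\omega)_\tau)$, so that $\varepsilon\in_\tau\chi$ denotes ordinary membership; negation is set complement on both sides, and $\mathsf{tt},\mathsf{ff}$ are the top and bottom of the same boolean algebra. For the inductive step, $\vee$ and $\wedge$ are union and intersection in both settings, and the four quantifier clauses range, at the world $\omega$, over $\interp{\tau}{\Tempmodel}_\omega=d(\omega)_\tau$ (first order) or over $\mathrm{P}(d(\omega)_\tau)$ (second order); so these cases follow directly from the inductive hypothesis.

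The heart of the argument is the temporal operators. For $\nextoperator$, the clause of Definition~\ref{definizione temporal operators prefasci relazionali} asks that for every $t\in\mathrm{T}$ with $\domain{t}=\omega$ the fibre $\{z:\angbr{z}{s}\in X_t\}$ meet $\interp{\psi}{\Tempmodel}_{\codomain{t}}$; since $X_t=(cr)^{\dagger}$ and $cr$ is a partial function, this fibre is nonempty iff $cr(s)\downarrow$, in which case it is the singleton $\{cr(s)\}$, so the condition is exactly ``for every transition $(\omega,cr,\sigma)\in\rightsquigarrow$, $cr(s)$ is defined and satisfies $\psi$ at $\sigma$''. For $\untiloperator$ and $\wuntiloperator$ (hence also $\dm$ and $\sq$), the quantification is over $\cammini{\mathrm{T}}{\omega}$, which under the correspondence is precisely the set of infinite transition paths $(\omega_0\rightsquigarrow\omega_1\rightsquigarrow\cdots)$ of $\model$ starting at $\omega$; moreover the composite $X_{\mathrm{t}_{\le i}}$ is the converse of the partial homomorphism $cr_{t_i}\circ\cdots\circ cr_{t_1}$, because composites of partial functions are again partial functions and $(\,\cdot\,)^{\dagger}$ turns composition around. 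Hence ``there exists $z_i\in X_{\omega_i}$ with $\angbr{z_i}{s}\in X_{\mathrm{t}_{\le i}}$ and $z_i\vDash A$'' translates verbatim into ``the image of $s$ along the $i$-th prefix of the path is defined and satisfies $A$'', which is the clause defining $\untiloperator$ (resp.\ $\wuntiloperator$) in~\cite{CounterpartSemanticsGLV}; alternatively, one matches the least/greatest fixpoint descriptions recalled in the remark after Definition~\ref{definizione temporal operators prefasci relazionali} with the corresponding fixpoint definitions there.

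I expect the main obstacle to be the bookkeeping around the direction of the relations and the passage to composites. One must verify carefully that the existential ``$\exists z_i$'' in the relational-presheaf semantics is actually deterministic — converses of partial functions, composed, still form a partial function — so that it genuinely coincides with the functional image used in the counterpart-model semantics; and one must keep the indexing aligned, with path prefixes $\mathrm{t}_{\le i}=t_i t_{i-1}\cdots t_1$ matched to the enumeration of worlds $\omega_i=\codomain{t_i}$ on both sides, and the case of an empty path or a finite segment handled by the usual $\codomain{t_j}=\domain{t_{j+1}}$ conditions. Once this dictionary is fixed, each clause matches on the nose and the induction closes.
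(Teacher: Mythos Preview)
Your proposal is correct and is in fact considerably more detailed than what the paper offers: the paper gives no explicit proof of this theorem at all, merely stating it as a consequence of the bijective correspondence of Theorem~\ref{teorema corrispondenza 1-1}. Your structural induction on formulae, with the careful unwinding of the temporal clauses via the observation that converses of partial functions compose to converses of partial functions, is exactly the argument needed to substantiate that claim, and it goes through as you describe.
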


%see Definition \ref{definizione temporal operators prefasci relazionali}.
\begin{example}
The notion of temporal $\mW$-counterpart model allows us to obtain as particular instance the semantics for standard LTL by considering as temporal structure the free category generated by a linear flow of time as in Example \ref{example LTL}. In particular, the interpretation of the operators $\nextoperator,\dm,\sq, \untiloperator,\wuntiloperator$ coincide with the usual one of LTL.
\end{example}

\begin{example}
Let us consider our running example of Figure \ref{Figura grafi e morfismi parziali} and the temporal structure given by the morphism $\mathrm{T}:=\{f_0,f_1,f_2,f_3\}$. Now let us consider the property  $[y:\tau_N] \;\exists_{\tau_N}x. (((x\neq y) \wedge\nextoperator (x=y))$ presented in Example \ref{example prop. sintat.}: we have that
\begin{itemize}
    \item $\interp{[y:\tau_N] \;\exists_{\tau_N}x. ((x\neq y) \wedge\nextoperator (x=y)) }{\Relmodel}_{\omega_0}=\{n_0,n_2\}$ is the set of nodes of the graph $\mathbf{G}_0$ that will be identified at the next step, i.e. at the world $\omega_1$;
    \item $\interp{[y:\tau_N]\;\exists_{\tau_N}x. ((x\neq y) \wedge\nextoperator (x=y)) }{\Relmodel}_{\omega_1}=\{n_3,n_4\}$ is the set of nodes of the graph $\mathbf{G}_1$ that will be identified at the next step, i.e. at the world $\omega_2$;
    \item $\interp{[y:\tau_N] \;\exists_{\tau_N}x. ((x\neq y) \wedge\nextoperator (x=y)) }{\Relmodel}_{\omega_2}=\emptyset$ is the set of nodes of the graph $\mathbf{G}_2$ that will be identified at the next step.
\end{itemize}

\end{example}
\begin{example}
Recall the toy example presented in the introduction, i.e, the model with two states $s_0$ and $s_1$. In order to describe it, consider a one-sorted signature $\Sigma=\{\tau\}$ with no function symbols, and the free category $\mathcal{S}$ generated by the following diagram 
\[
\xymatrix{
s_0  \ar@<.5ex>[r]^{f_0} & s_1\ar@<.5ex>[l]^{f_1}
}
\]
and the relational presheaf $\relpresheaf{\mathcal{S}}{D}$ 
such that $D_{s_0}=\{i \}$, $D_{s_1}=\emptyset$, and both $D_{f_0}$ and $D_{f_1}$ are the empty relation. Then consider the counterpart model $(\mathcal{S},\mathrm{T}:=\{f_0,f_1\},D,\emptyset)$.
One of the main advantages of the counterpart semantics is the possibility to deal with processes \emph{destroying elements}. In this setting an interesting formula is $[x:\tau] \; \mathbf{present}(x)\wedge\nextoperator(\nextoperator \mathbf{present}(x))$, i.e. meaning that there exists an entity at a given world that has a counterpart after two steps. If we consider its interpretation at world $s_0$ in the model $(\mathcal{S},\{f_0,f_1\},D,\emptyset)$, it is direct to check that 
$$\interp{[x:\tau] \; \mathbf{present}(x)\wedge\nextoperator(\nextoperator \mathbf{present}(x))}{\Tempmodel}_{s_0}=\emptyset.$$
This is exactly what expected, since it essentially means that entity $i$  has no counterpart at the world $s_0$ after two steps, even if it clearly belong to the world relative to $s_0$.
\end{example}
\begin{example}[Deallocation]\label{esempio deallocation}
The creation and destruction of entities has attracted the interest of various authors
(e.g. \cite{distefano2002,Yahav2006}) as a means for reasoning about the allocation and deallocation of resources or processes.
Our logic does not offer an explicit mechanism for this purpose. Nevertheless, as we have shown in
Example \ref{example prop. sintat.}, we can easily derive a predicate regarding the presence of an entity in a certain world
as $ \mathbf{present}_{\tau} (x)$. Using this predicate together with the next-time modality, we can
reason about the preservation and deallocation of some entities after one step of evolution of the system as
$\mathbf{nextStepPreserved}(x)\equiv\mathbf{present}_{\tau} (x)\wedge \nextoperator ( \mathbf{present}_{\tau} (x))$ %$\mathbf{possiblyPreserved}(x)\equiv\mathbf{present}_{\tau} (x)\wedge \dm ( \mathbf{present}_{\tau} (x))$, $\mathbf{necessarilyPreserved}(x)\equiv\mathbf{present}_{\tau} (x)\wedge \sq ( \mathbf{present}_{\tau} (x))$
and
$\mathbf{nextStepDeallocated}(x)\equiv\mathbf{present}_{\tau} (x)\wedge \nextoperator (\neg \mathbf{present}_{\tau} (x))$.
%\mathbf{possiblyDeallocated}(x)\equiv\mathbf{present}_{\tau} (x)\wedge \dm (\neg \mathbf{present}_{\tau} (x))$. 
%and $\mathbf{necessarilyDeallocated}(x)\equiv\mathbf{present}_{\tau} (x)\wedge \sq (\neg \mathbf{present}_{\tau} (x))$. 
%\marginpar{non mi convince il possibly}

Now we provide an interpretation 
%of $\mathbf{nextstepPreserved}(x)$ and of $\mathbf{nextstepDeallocated}(x)$ 
of these two formulae for 
our running example %(other cases can be presented similarly). 
%Let us thus consider our running example 
in Figure~\ref{Figura grafi e morfismi parziali} and the temporal structure given by the morphism $\mathrm{T}:=\{f_0,f_1,f_2,f_3\}$. Then %we have that
\begin{itemize}
     \item $\interp{[x:\tau_E]\;\mathbf{nextstepPreserved}(x) }{\Relmodel}_{\omega_0}=\{e_0,e_1\}$ is the set of edges of $\mathbf{G}_0$ that survive at the next steps;
     \item $\interp{[x:\tau_E]\; \mathbf{nextstepPreserved}(x) }{\Relmodel}_{\omega_1}=\emptyset$ is the set of edges of $\mathbf{G}_1$ that survive at the next steps;
     \item $\interp{[x:\tau_E]\;\mathbf{nextstepPreserved}(x) }{\Relmodel}_{\omega_0}=\{e_5\}$ is the set of edges of $\mathbf{G}_2$ that survive at the next steps.
\end{itemize}
Notice that $\interp{[x:\tau_E]\; \mathbf{nextstepPreserved}(x) }{\Relmodel}_{\omega_1}=\emptyset$ because we have that $d(f_1)$ forgets the arrow $e_4$, while $d(f_2)$ forgets the arrow $e_3$. This follows from  our definition of the next step operator, where we require that a given property has to hold for \emph{every step of length one}. Then, we conclude considering the case of next-step deallocation
\begin{itemize}
    \item $\interp{[x:\tau_E]\;\mathbf{nextstepDeallocated}(x) }{\Relmodel}_{\omega_0}=\{e_2\}$ is the set of edges of $\mathbf{G}_0$ that are deallocated at the next steps;
     \item $\interp{[x:\tau_E]\; \mathbf{nextstepDeallocated}(x)) }{\Relmodel}_{\omega_1}=\emptyset$ is the set of edges of $\mathbf{G}_1$ that are deallocated at the next steps;
     \item $\interp{[x:\tau_E]\;\mathbf{nextstepDeallocated}(x) }{\Relmodel}_{\omega_0}=\emptyset$ is the set of edges of $\mathbf{G}_2$ that are deallocated at the next steps.
\end{itemize}
\end{example}

\section{Conclusions and future works}

%riassumere punto di partenza lavoro (sottolineando rilevanza counterpart)

%cosa abbiamo fatto di nuovo noi e rimarcare l'elasticità e generaliztà dell'approccio, i.e. LTL e semantica conterpart, si ricavano come casi particolari. Sottolineare la "semplicità" della generalizzazione

%future applicazioni 

In the paper we presented a counterpart semantics for (linear time) temporal logics that is based on relational presheaves.
Starting points were previous works on modal logics, namely the set-theoretical counterpart semantics in~\cite{CounterpartSemanticsGLV} and the functional presheaves model for Kripke frames in~\cite{GhilardiMeloni1988}, and indeed they are both recovered in our framework.

Counterpart semantics offers a suitable solution to the trans-world identity problem, which we have argued to be relevant also from a practitioner point of view. The use of preshaves allows us to easily recover it, as well as to model second-order quantification.

The choice of linear time temporal logics asked for some ingenuity in the way to model the next step operator, with the introduction of what we called temporal structures, as well as in the treatment of negation, which required the use of a syntax for positive formulae. Our presheaf framework may as well recover the semantics of other temporal logics, and in fact we believe that it is general enough that it could be adapted to many different formalisms: indeed, with respect to our focus on partial ones, relational presheaves allows for a very general notion of morphism between worlds, which could e.g. be pivotal for formalisms where non-determinism plays a central role.

From a categorical perspective, our results open two challenging lines for future works: the first one regards the problem of proving a result of completeness. The choice of relational presheaves makes this task tricky to pursue, but it certainly deserves future investigations.
The second regards the study of formal criteria for the semantics of quantified temporal logic in the spirit of categorical logic, where models are thought of as opportune morphisms. A possible solution could be presenting temporal models as morphisms of opportune Lawvere doctrines. This is also a non-trivial problem that we are going to deal with in future work.

\bibliographystyle{abbrv}
\bibliography{biblio_davide}
\end{document}